\documentclass[11pt]{article}
\usepackage{amsfonts,amsmath, amssymb, amsthm, algorithm,algpseudocode, enumitem, thm-restate}

\PassOptionsToPackage{obeyspaces}{url}
\usepackage[colorlinks=true,citecolor=blue,urlcolor=blue,linkcolor=blue,bookmarksopen=true]{hyperref}
\usepackage{cleveref}

\setlength{\oddsidemargin}{0px}
\setlength{\evensidemargin}{0px}
\setlength{\headsep}{0pt}
\setlength{\textwidth}{17cm}
\setlength{\textheight}{22cm}
\setlength{\voffset}{-1.5cm}

\author{Zhiyang He\footnote{Department of Mathematics, Massachusetts Institute of Technology, Cambridge, United States.}\and Jason Li\footnote{Simons Institute, UC Berkeley, Berkeley, United States.}}
\title{Breaking the $n^k$ Barrier for Minimum $k$-cut on Simple Graphs}

\newtheorem{theorem}{Theorem}[section]

\newtheorem{claim}[theorem]{Claim}
\newtheorem*{claim*}{Claim}
\newtheorem{definition}[theorem]{Definition}

\newtheorem{lemma}[theorem]{Lemma}
\newtheorem*{lemma*}{Lemma}
\newtheorem{corollary}[theorem]{Corollary}

\begin{document}

\maketitle
\thispagestyle{empty}
\begin{abstract}
In the minimum $k$-cut problem, we want to find the minimum number of edges whose deletion breaks the input graph into at least $k$ connected components. The classic algorithm of Karger and Stein~\cite{karger1996new} runs in $\tilde O(n^{2k-2})$ time,\footnote{$\tilde O(\cdot)$ denotes omission of polylogarithmic factors in $n$.} and recent, exciting developments have improved the running time to $O(n^k)$~\cite{GHLL20}. For general, weighted graphs, this is tight assuming popular hardness conjectures.

In this work, we show that perhaps surprisingly, $O(n^k)$ is not the right answer for simple, \emph{unweighted} graphs. We design an algorithm that runs in time $O(n^{(1-\epsilon)k})$ where $\epsilon>0$ is an absolute constant, breaking the natural $n^k$ barrier. This establishes a separation of the two problems in the unweighted and weighted cases.
\end{abstract}
\newpage

\setcounter{page}{1}

\section{Introduction}
In this paper, we study the (unweighted) minimum $k$-cut problem: given an undirected graph $G=(V,E)$ and an integer $k$, we want to delete the minimum number of edges to split the graph into at least $k$ connected components. Throughout the paper, let $\lambda_k$ denote this minimum number of edges. Note that the $k$-cut problem generalizes the global minimum cut problem, which is the special case $k=2$.

For fixed constant $k\ge2$, the first polynomial-time algorithm for this problem is due to Goldschmidt and Hochbaum~\cite{GH94}, who designed an algorithm running in $n^{O(k^2)}$ time. Subsequently, Karger and Stein showed that their (recursive) randomized contraction algorithm solves the problem in $\tilde O(n^{2k-2})$ time. This was later matched by a deterministic algorithm of Thorup~\cite{Thorup08} based on tree packing, which runs in $\tilde O(n^{2k})$ time.

These algorithms remained the state of the art until a few years ago, when new progress was established on the problem~\cite{gupta2018faster,GLL19,Li19}, culminating in the $\tilde O(n^k)$ time algorithm of Gupta, Harris, Lee, and Li~\cite{GHLL20} which is, surprisingly enough, just the original Karger-Stein recursive contraction algorithm with an improved analysis. The $\tilde O(n^k)$ time algorithm also works for \emph{weighted} graphs, and they show by a reduction to max-weight $k$-clique that their algorithm is asymptotically optimal, assuming the popular conjecture that max-weight $k$-clique cannot be solved faster than $\Omega(n^{k-O(1)})$ time. However, whether the algorithm is optimal for \emph{unweighted} graphs was left open; indeed, the (unweighted) $k$-clique problem can be solved in $n^{(\omega/3)k+O(1)}$ time through fast matrix multiplication.\footnote{As standard, we define $\omega$ as the smallest constant such that two $n\times n$ matrices can be multiplied in $O(n^{\omega+o(1)})$ time. The best bound known is $\omega<2.373$~\cite{alman2021refined}, although $\omega=2$ is widely believed.)} Hence, the time complexity of \emph{unweighted} minimum $k$-cut was left open, and it was unclear whether the right answer was $n^k$, or $n^{(\omega/3)k}$, or somewhere in between.

In this paper, we make partial progress on this last question by showing that for \emph{simple} graphs, the right answer is asymptotically bounded away from $n^k$:
\begin{theorem}
There is an absolute constant $\epsilon>0$ such that the minimum $k$-cut problem can be solved in $n^{(1-\epsilon)k+O(1)}$ time.
\end{theorem}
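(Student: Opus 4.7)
The plan is to adapt the Karger--Stein recursive contraction algorithm and exploit properties specific to simple graphs, via a case split on the value of $\lambda_k$ against a threshold $T$ chosen to balance the two cases; the final runtime $n^{(1-\epsilon)k+O(1)}$ will come from taking the worse of the two bounds.

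\textbf{Small $\lambda_k$ case.} When $\lambda_k \leq T$, by averaging over the $k$ parts of an optimal partition, there must exist a part $V_i$ with boundary $d(V_i) \leq 2\lambda_k/k$. Such small-boundary sets in a simple graph are structurally constrained: classical tools (Karger's theorem on approximately minimum cuts, the cactus representation, and variants for simple graphs) bound the number of sets whose boundary is within a constant factor of the global minimum cut by a polynomial in $n$. I would enumerate candidate parts $V_i$ using these tools and recurse for a $(k-1)$-cut on the contracted instance $G/V_i$. Iterating this reduction should yield a runtime of $n^{k - \Omega(1)}$ whenever $\lambda_k$ is sufficiently small, with the savings growing as $\lambda_k$ decreases.

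\textbf{Large $\lambda_k$ case.} When $\lambda_k > T$, simplicity forces the graph to be dense: if $k-1$ vertices had degree below $\lambda_k/(k-1)$, then isolating them as singleton components would already give a $k$-cut smaller than $\lambda_k$. Hence all but at most $k-1$ vertices have degree $\Omega(T/k)$, which implies the graph has $m = \Omega(nT/k)$ edges. I would leverage this density in one of two ways: (i) re-analyze the Karger--Stein random contraction, where the per-step survival probability $1 - \lambda_k/m_t$ is closer to $1$ when $m_t$ is large, giving an overall success probability of order $n^{-(1-\epsilon')k}$ for some $\epsilon' > 0$; or (ii) identify a subgraph that provably lies entirely inside one part of every optimal $k$-cut (e.g.\ a sufficiently dense cluster that cannot be split cheaply) and contract it to reduce the effective vertex count before invoking the $\tilde O(n^k)$ base algorithm.

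\textbf{Main obstacle.} The hard part is the large-$\lambda_k$ case. First, during random contraction, simplicity is immediately lost after the first contraction step, so we need a quantitative substitute that survives multiple contractions---for instance, a bound on edge multiplicity, or a density invariant that degrades gracefully as the graph shrinks. Second, coupling the two cases so that both beat $n^k$ by the same absolute constant requires tight simultaneous control over the recursion depth in the small-$\lambda_k$ reduction and the contraction schedule in the large-$\lambda_k$ analysis; this balancing is where the absolute constant $\epsilon$ emerges, and it likely requires new combinatorial lemmas about simple graphs beyond what the GHLL20 framework provides.
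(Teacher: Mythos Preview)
Your high-level case split on the size of $\lambda_k$ matches the paper, but both branches have genuine gaps.

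\textbf{Small $\lambda_k$ case.} Your enumeration step does not work. Karger's bound on the number of $\alpha$-approximate minimum cuts is $n^{O(\alpha)}$ where $\alpha$ is measured relative to the \emph{global} minimum cut $\lambda_2$, not relative to $\lambda_k$. A set with boundary $2\lambda_k/k$ is an $\alpha$-approximate min cut with $\alpha = 2\lambda_k/(k\lambda_2)$, and this ratio is uncontrolled: take a graph that is a single bridge joining two dense halves, so $\lambda_2=1$ while $\lambda_k$ can be polynomial in $n$. Then the number of candidate parts is $n^{\Theta(\lambda_k/k)}$, which is far too many. The cactus representation only captures exact minimum cuts, and simple-graph variants do not close this gap. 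The paper sidesteps enumeration entirely and uses a black-box FPT algorithm of Lokshtanov, Saurabh, and Surianarayanan running in $\lambda_k^{O(k)}n^{O(1)}$ time, which is $n^{(1-\epsilon)k+O(1)}$ whenever $\lambda_k \le n^{1/(t+1)}$ for the appropriate constant $t$.

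\textbf{Large $\lambda_k$ case.} Your option~(ii) --- contract a dense cluster that provably lies inside one part --- is the right instinct; this is exactly Kawarabayashi--Thorup sparsification, which the paper adapts to shrink the graph to $\tilde O(n/\lambda_k)$ vertices. But you are missing the central obstacle: if the optimal $k$-cut has singleton components, \emph{no} nontrivial cluster can be safely contracted, because the singleton vertex may sit inside any cluster. Your option~(i) does not help here either, since the Karger--Stein analysis already accounts for edge density and cannot beat $n^{-k}$ survival probability on the full cut. The paper's key idea is to decompose the target cut into a \emph{border} (the cut with singletons merged back into their surrounding cluster, which \emph{is} preserved under sparsification and has value at most roughly $\lambda_k - |I|\lambda_k/k$ when $|I|$ singletons are merged) and the \emph{islands} (the merged singletons themselves). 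One runs contraction on the sparsified graph to find the border in time $(n/\lambda_k)^{k-|I|+O(1)}$, then recovers the islands by a matrix-multiplication routine \`a la Ne\v{s}et\v{r}il--Poljak in time $n^{(\omega/3)|I|+O(1)}$. The product of these two terms is $n^{(1-\epsilon)k+O(1)}$ for every value of $|I|$. This border/island decomposition is the main technical contribution, and your proposal does not contain it.
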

In fact, we give evidence that $n^{(\omega/3)k+O(1)}$ may indeed be the right answer (assuming the popular conjecture that $k$-clique cannot be solved any faster). This is discussed more in the statement of \Cref{thm:main}.

\subsection{Our Techniques}

Our high-level strategy mimics that of Li~\cite{Li19}, in that we make use of the Kawarabayashi-Thorup graph sparsification technique on simple graphs, but our approach differs by exploiting matrix multiplication-based methods as well. Below, we describe these two techniques and how we apply them.

\paragraph{Kawarabayashi-Thorup Graph Sparsification}
Our first algorithmic ingredient is the (vertex) graph sparsification technique of Kawarabayashi and Thorup~\cite{kawarabayashi2018deterministic}, originally developed to solve the deterministic minimum cut problem on simple graphs. At a high level, the sparsification process contracts the input graph into one that has a much smaller number of edges and vertices such that any \emph{non-trivial} minimum cut is preserved. Here, non-trivial means that the minimum cut does not have just a singleton vertex on one side. More recently, Li~\cite{Li19} has generalized the Kawarabayashi-Thorup sparsification to also preserve non-trivial minimum $k$-cuts (those without any singleton vertices as components), which led to an $n^{(1+o(1))k}$-time minimum $k$-cut algorithm on simple graphs. The contracted graph has $\tilde O(n/\delta)$ vertices where $\delta$ is the minimum degree of the graph. If $\delta$ is large enough, say, $n^\epsilon$ for some constant $\epsilon$, then this is $\tilde O(n^{1-\epsilon})$ vertices and running the algorithm of~\cite{GHLL20} already gives $n^{(1-\epsilon)k+O(1)}$ time. At the other extreme, if there are many vertices of degree less than $n^\epsilon$, then $\lambda_k\le kn^\epsilon$ since we can take $k-1$ of these low-degree vertices as singleton components of a $k$-cut. We then employ an exact algorithm for minimum $k$-cut that runs in $\lambda_k^{O(k)}n^{O(1)}$ time (such an algorithm has been shown to exist, as we will discuss further when stating Theorem~\ref{thm:main}), which is $n^{(1-\epsilon)k+O(1)}$ time. For the middle ground where $\lambda_k > kn^\epsilon$ but there are a few vertices of degree less than $n^\epsilon$, we can modify the Kawarabayashi-Thorup sparsification in~\cite{Li19} to produce a graph of $\tilde O(n/\lambda_k)$ vertices instead, which is enough. This concludes the case when there are no singleton components of the minimum $k$-cut.

\paragraph{Matrix Multiplication}
What if the minimum $k$-cut has components that are singleton vertices? If \emph{all but one} component is a singleton, then we can use a matrix multiplication-based algorithm similar to the Ne\v{s}etril and Poljak's algorithm for $k$-clique~\cite{nevsetvril1985complexity}, which runs in $n^{(\omega/3)k+O(1)}$ time. Thus, the main difficulty is to handle minimum $k$-cuts where some components are singletons, but not $k-1$ many. The following definition will be at the core of all our discussions for the rest of this paper.
\begin{definition}[Border and Islands]
Given a $k$-cut $C$ with exactly $r$ singleton components, we denote the singleton components as $S_1=\{v_1\},\,\ldots,\,S_r=\{v_r\}$ and denote the other components $S_{r+1},\ldots,S_k$. A \textbf{border} of $C$ is a cut obtained by merging some singleton components into larger components. More preciously, a border is defined by a subset $I\subseteq [r]$ and a function $\sigma: I \to [k]\setminus[r]$. Given $I$ and $\sigma$, we let $S'_i=S_i\cup\{v_j:j\in I,\sigma(j)=i\}$, then the border $B_{I, \sigma}$ is the $(k-|I|)$-cut defined by the components $S_{r+1}', \ldots, S_{k}'$, together with the unmerged singleton components $S_j$ where $j\in [r]\setminus I$. The set of vertices $\{v_i: i\in [I]\}$, corresponding to the merged singleton components, is called the \textbf{islands}.
\end{definition}

Given this definition, our main technical contribution is as follows: we show that if the cut $C$ has exactly $r$ singleton components, then we can first apply Kawarabayashi-Thorup sparsification to compute a graph $G'$ of size $\tilde{O}(n/\lambda_k)$ that preserves some borders of $C$. We then use the algorithm of~\cite{GHLL20} on $G'$ to discover a border, which will succeed with probability roughly $1/\tilde O((n/\lambda_k)^{k-|I|})$. Finally, we run a matrix multiplication-based algorithm to locate the islands in an additional $n^{(\omega/3)|I|+O(1)}$ time. Altogether, the runtime becomes 
$\tilde O((n/\lambda_k)^{k-s}) \cdot n^{(\omega/3)s+O(1)}$, which is $n^{(1-\epsilon')k+O(1)}$ as long as $\lambda_k\ge n^\epsilon$, where $\epsilon'$ depends on $\epsilon$.

We summarize our discussions with the following theorem, which is the real result of this paper.
\begin{theorem}\label{thm:main}
Suppose there exists an algorithm that takes in a simple, unweighted graph $G$, and returns its minimum $k$-cut in time $\lambda_k^{tk}n^{O(1)}$. Let $c = \max(\frac{t}{t+1}, \frac{\omega}{3})$. Then we can compute a minimum $k$-cut of a simple, unweighted graph in $O(n^{ck + O(1)})$.
\end{theorem}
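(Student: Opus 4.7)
The strategy is a dichotomy on $\lambda_k$ together with an enumeration over the singleton structure of the optimum. If $\lambda_k \leq n^{1/(t+1)}$, we simply invoke the assumed exact algorithm; its runtime $\lambda_k^{tk}\, n^{O(1)}$ is at most $n^{tk/(t+1) + O(1)} \leq n^{ck + O(1)}$. Otherwise $\lambda_k > n^{1/(t+1)}$, and we enumerate over the constantly many tuples $(r, s, I, \sigma)$, where $r$ is the number of singleton components of the optimum, $I \subseteq [r]$ with $|I| = s$ are the islands, and $\sigma \colon I \to [k]\setminus [r]$ records which non-singleton component each island merges into. For each tuple we attempt to recover the corresponding border $B_{I,\sigma}$.

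For a fixed guess the algorithm runs in two stages. First, apply the Kawarabayashi--Thorup sparsification of Li~\cite{Li19}, tailored so that the border $B_{I,\sigma}$ survives contraction, yielding a graph $G'$ on $\tilde O(n/\lambda_k)$ vertices in which $B_{I,\sigma}$ is a near-minimum $(k-s)$-cut; running the recursive contraction algorithm of~\cite{GHLL20} on $G'$ finds this border with probability $\tilde\Omega((n/\lambda_k)^{-(k-s)})$, so boosting to high probability costs $\tilde O((n/\lambda_k)^{k-s})$ time. Second, given the border, a matrix-multiplication-based routine in the spirit of Ne\v{s}et\v{r}il--Poljak's $k$-clique algorithm locates the $s$ islands among the remaining vertices in time $n^{(\omega/3)s + O(1)}$. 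Substituting $\lambda_k \geq n^{1/(t+1)}$ bounds the per-guess time by $\tilde O((n/\lambda_k)^{k-s}) \cdot n^{(\omega/3)s + O(1)} \leq n^{\max(t/(t+1),\, \omega/3)\, k + O(1)} = n^{ck + O(1)}$, as can be verified by separately checking the cases $t/(t+1) \geq \omega/3$ and $t/(t+1) < \omega/3$. Returning the best candidate across all guesses (and the small-$\lambda_k$ branch) yields the claimed algorithm.

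The main obstacle will be the first stage: extending the Li~\cite{Li19} sparsification so that it preserves arbitrary borders $B_{I,\sigma}$ rather than only non-trivial minimum $k$-cuts, while still achieving the $\tilde O(n/\lambda_k)$ vertex bound and handling the case of many low-degree vertices (which instead forces $\lambda_k$ small and is routed to the exact branch). A secondary difficulty is the island-finding routine, which must reduce the task of choosing $s$ singletons that complete a fixed border to a $k$-clique-style detection problem amenable to fast matrix multiplication; this should follow the Ne\v{s}et\v{r}il--Poljak template but requires care when $s$ lies strictly between $1$ and $k-1$.
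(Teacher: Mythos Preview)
Your high-level plan matches the paper's: dichotomy on $\lambda_k$ at threshold $n^{1/(t+1)}$, Kawarabayashi--Thorup sparsification to $\tilde O(n/\lambda_k)$ vertices, a contraction-based search for a border, and a Ne\v{s}et\v{r}il--Poljak variant to recover the islands. The final arithmetic is also correct.

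The gap is in how you couple the enumeration with the sparsification, and it hides the one lemma that makes the running time work. You propose to guess $(I,\sigma)$ first and then ``tailor'' the sparsification so that $B_{I,\sigma}$ survives. But $(I,\sigma)$ are defined relative to the unknown optimum, so there is nothing concrete to tailor to; and more importantly, an \emph{arbitrary} border of a minimum $k$-cut can have size as large as $\lambda_k$ (merging a singleton into a component where it has no neighbors saves nothing), in which case the GHLL analysis only yields survival probability $(n/\lambda_k)^{-k}$, not the $(n/\lambda_k)^{-(k-s)}$ you assert. The paper instead runs the sparsification \emph{once} and lets it \emph{determine} $(I,\sigma)$: the islands are precisely those singleton components $\{v\}$ of the optimum that land inside a nontrivial core of the expander decomposition, and $\sigma$ sends each such $v$ to the unique large component meeting that core. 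The crucial point (\Cref{lem:border-size-bound}, via \Cref{lem:unique-side}) is that every such $v$, being in a core, has at least $(1-o(1))\deg(v)\ge(1-o(1))\lambda_k/k$ neighbors in its target component, so merging it decreases the cut by that much; summing over the $|I|$ islands gives $|B_{I,\sigma}|\le\lambda_k-(1-o(1))|I|\,\lambda_k/k$. This size bound, not merely ``$B_{I,\sigma}$ is a near-minimum $(k-s)$-cut,'' is exactly what produces the $(n/\lambda_k)^{k-|I|+O(1)}$ cost for the border search via \Cref{lem:karger} with $\beta\approx 1-|I|/k$, and it is the ingredient your plan is missing. With it in hand you only need to iterate over the single integer $|I|\in\{0,\ldots,k-1\}$, not over all of $(r,s,I,\sigma)$.
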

Recently, Lokshtanov, Saurabh, and Surianarayanan~\cite{lokshtanov2020parameterized} showed an algorithm for exact minimum $k$-cut that runs in time $\lambda^{O(k)}n^{O(1)}$. Combining their result with Theorem~\ref{thm:main}, we obtain a minimum $k$-cut that runs in time $O(n^{ck + O(1)})$ for some constant $c < 1$. We further note that we use their algorithm in a black-box manner, which means if one could derive an exact algorithm with a better constant $t$, then our algorithm will have an improved runtime up to $O(n^{\omega k/3 + O(1)})$.

\section{Main Algorithm}
In this section, we discuss our algorithm in detail. Given a simple, unweighted graph $G$, we first run an approximate $k$-cut algorithm to determine the magnitude of $\lambda_k$. If $\lambda_k\le O(n^{\frac{1}{t+1}})$, then we can run the exact algorithm on $G$ and output its result. Otherwise, we apply Lemma~\ref{lem:kt-main}, which is a modified version of Kawarabayashi-Thorup sparsification~\cite{kawarabayashi2018deterministic} for $k$-cuts. These modifications, discussed in Section~\ref{sec:KT}, will give us a graph $G'$ on $\tilde{O}(n/\lambda_k)$ vertices that preserves at least one border for every minimum $k$-cut of $G$. Now we fix any minimum $k$-cut of $G$, and fix its border $B_{I, \sigma}$ specified by Lemma~\ref{lem:kt-main}. For every possible value of $|I|$, we run Lemma~\ref{lem:karger} to discover $B_{I, \sigma}$ with high probability.

Once we found the border, locating the islands is simple. In Section~\ref{sec:islands}, we present a slight variant of Ne\v{s}etril and Poljak's $k$-clique algorithm~\cite{nevsetvril1985complexity} that solves the following problem in $O(n^{\frac{\omega}{3}r + O(1)})$ time. \begin{definition}[$r$-Island Problem]
Given a graph $G$, find the optimal $r+1$-cut $C$ which has exactly $r$ singleton components. 
\end{definition}
This enables us to recover the minimum $k$-cut in $G$ by guessing the number of islands in each non-singleton component specified by the border, and finding them independently. The total runtime is $O(n^{\frac{\omega}{3}|I| + O(1)})$ since the number of islands in any non-singleton component is at most the total number of islands $|I|$. This proves Theorem~\ref{thm:main}.

Our methods are summarized in the following algorithm. Note that for the initial $O(1)$-approximation step, various algorithms can be used. 

\begin{algorithm}[H]
\caption{Main Algorithm}
\label{alg:main}
\begin{algorithmic}[1]
\State Run an $2$-approximation algorithm of $k$-CUT in polynomial time~\cite{SV95}, and let its output be $\overline{\lambda_k}$. 
\If{$\overline{\lambda_k} \le 10n^{\frac{1}{t+1}}$}
    \State Run the given exact algorithm for $k$-CUT
\Else 
    \State Apply Lemma~\ref{lem:kt-main} to obtain a graph $G'$ on $O_k(n/\lambda_k)$ vertices. 
    \For{each $i=0,1,2,\ldots,r$} \Comment{Iterate over possible values of $i=|I|$ of the border $B_{I,\sigma}$}
        \State Run \Cref{lem:karger} with parameter $\beta=1-(1-2/\log n)i/k$.
        \For{each cut $C$ output by \Cref{lem:karger}} \Comment{at most $(n/\lambda_k)^{k-(1-2/\log n)i+O(1)}$ many}
            \State Guess the number of islands in each non-singleton component of $C$.
            \State Run the Island Discovery Algorithm in each non-singleton component.
        \EndFor
    \EndFor
\EndIf
\end{algorithmic}
\end{algorithm}

\subsection{Analysis}
Our analysis is divided into three parts, each corresponding to one section of the algorithm. The first part concerns the Kawarabayashi-Thorup sparsification, and the following theorem is proved in \Cref{sec:KT}.

\begin{restatable}{lemma}{KTSparsification}\label{lem:kt-main}
For any simple graph, we can compute in $k^{O(k)}n^{O(1)}$ time a partition $V_1,\ldots,V_q$ of $V$ such that $q=(k\log n)^{O(1)}n/\lambda_k$ and the following holds:
 \begin{enumerate}
 \item[$(*)$] For any minimum $k$-cut $C$ with exactly $r$ singleton components, there exists $I\subset [r]$ and a function $\sigma:I \to[k]\setminus[r]$ such that the border of $C$ defined by $I$ and $\sigma$, namely $B_{I, \sigma}$, agrees with the partition $V_1,\ldots,V_q$. In other words, all edges of $B_{I, \sigma}$ are between some pair of parts $V_i, V_j$. Moreover, we have $|B_{I, \sigma}|\le\lambda_k-(1-2/\log n)|I|\lambda_k/k$.
 \end{enumerate}
Contracting each $V_i$ into a single vertex, we obtain a graph $G'$ on $\tilde{O}(n/\lambda_k)$ vertices that preserves $B_{I,\sigma}$. 
\end{restatable}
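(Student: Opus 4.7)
The plan is to adapt Li's $k$-cut extension of the Kawarabayashi-Thorup sparsification~\cite{Li19,kawarabayashi2018deterministic}, but with the connectivity threshold tuned to $\lambda_k$ rather than to the minimum degree $\delta$. The aim is to produce a partition $V_1,\ldots,V_q$ with $q=\tilde{O}(n/\lambda_k)$ parts that satisfies the following structural guarantee: every multi-vertex part $V_l$ is \emph{internally robust}, in the sense that each vertex $v \in V_l$ has internal degree $e(v, V_l\setminus\{v\}) \ge (1-2/\log n)\lambda_k/k$. The $k^{O(k)} n^{O(1)}$ running time is then inherited from Li's construction, which invokes $k$-cut subroutines on small auxiliary graphs.

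To verify condition $(*)$, I would fix any minimum $k$-cut $C$ with singletons $\{v_1\},\ldots,\{v_r\}$ and non-singleton components $S_{r+1},\ldots,S_k$. A KT-style argument, adapted to $k$-cuts as in~\cite{Li19}, shows that no multi-vertex part $V_l$ can be split across two distinct non-singleton components of $C$: otherwise the contribution of the internal cut edges within $V_l$ would exceed what a minimum $k$-cut can afford. Consequently, after removing any singletons of $C$ that happen to lie inside $V_l$, the remaining vertices of $V_l$ lie in a single non-singleton component, which I denote $S_{i(l)}$.

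Next, I would define $I = \{j \in [r] : v_j \text{ lies in a multi-vertex part}\}$ and, for $j \in I$ with $v_j \in V_l$, set $\sigma(j) = i(l)$. By construction, each part is entirely contained in a single border component: either $V_l = \{v_j\}$ for some $j\notin I$, or every vertex of $V_l$ lies in $S'_{i(l)} = S_{i(l)} \cup \{v_j : j\in I,\,\sigma(j)=i(l)\}$. Hence $B_{I,\sigma}$ agrees with the partition. For the size bound, the saving from merging an island $v_j \in I$ into $S_{\sigma(j)}$ is $e(v_j, S_{\sigma(j)}) \ge e(v_j, V_l\setminus\{v_j\}) \ge (1-2/\log n)\lambda_k/k$ by the internal-robustness guarantee; summing over $j \in I$ yields $|B_{I,\sigma}| \le \lambda_k - (1-2/\log n)|I|\lambda_k/k$, as required.

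The hard part will be establishing the structural guarantee with both the precise $(1-2/\log n)\lambda_k/k$ internal-degree lower bound and the $\tilde{O}(n/\lambda_k)$ bound on $q$ simultaneously. This calls for a quantitative reworking of the KT/Li trim-and-shed procedure tuned to this regime: the trimming threshold should be chosen so that every low-degree vertex (in particular, every low-degree singleton of any minimum $k$-cut) is forced into a singleton part of the output, while higher-degree vertices are clustered with enough internal connectivity to meet the per-island saving target. The rest of the analysis — upper-bounding $q$ and arguing polynomial-time implementability — follows the template of~\cite{Li19,kawarabayashi2018deterministic} and should be routine in comparison.
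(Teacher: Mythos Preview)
Your overall plan matches the paper's: KT-style decomposition adapted to $k$-cuts, then merge each singleton that landed inside a multi-vertex part into that part's ``big'' component. But two concrete steps need repair.

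First, the saving inequality $e(v_j, S_{\sigma(j)}) \ge e(v_j, V_l\setminus\{v_j\})$ is false as written. If $V_l$ contains another singleton $v_{j'}$ of $C$, then $v_{j'}\in V_l\setminus\{v_j\}$ but $v_{j'}\notin S_{\sigma(j)}$, so an edge $(v_j,v_{j'})$ is counted on the right but not the left. The paper does not route the saving through internal degree in the \emph{part}; instead it proves directly (its \Cref{lem:unique-side}, using the expander/cluster structure together with minimality of $C$) that each $v_j$ in a nonempty core has at least $(1-2\epsilon)\deg(v_j)$ neighbors in the unique large component $S^*=S_{\sigma(j)}$, and then sums $e(v_j,S_{\sigma(j)})$ over $j\in I$ with no over-counting. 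Note also that shaving controls degree into the \emph{cluster} $C_i$, not into the smaller core $A_i=V_l$, so the ``internal degree $\ge(1-2/\log n)\lambda_k/k$ in $V_l$'' guarantee you posit is not what the KT construction actually delivers, and by itself is too weak for the no-split claim.

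Second, you omit the preprocessing that makes the $\lambda_k/k$ factor appear. The paper first peels off the (at most $k-1$) vertices of degree below roughly $\lambda_k/k$ as singleton parts, so that the remaining graph has $\delta\ge\lambda_k/k$; only then does $(1-2\epsilon)\deg(v_j)\ge(1-2\epsilon)\lambda_k/k$ follow. It also applies Nagamochi--Ibaraki sparsification to force $m=O(\lambda_k n)$, which is what converts the usual KT bound of $\tilde O(m/\delta^2)$ parts into $q=(k\log n)^{O(1)}n/\lambda_k$. Without these two regularization steps your construction has no mechanism to produce $\tilde O(n/\lambda_k)$ parts or the per-island saving of $(1-2/\log n)\lambda_k/k$.
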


Next, we describe and analyze the algorithm that computes the border. The following lemma is proved in \Cref{sec:border}.
\begin{restatable}{lemma}{Karger}\label{lem:karger}
Fix an integer $2\le s\le k$ and a parameter $\beta\le1$, and consider an $s$-cut $C$ of size at most $\beta\lambda_k$. There is an $n^{\beta k+O(1)}$ time algorithm that computes a list of $n^{\beta k+O(1)}$ $s$-cuts such that with high probability, $C$ is listed as one of the cuts.
\end{restatable}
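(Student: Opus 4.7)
The plan is to run the recursive random contraction algorithm of~\cite{GHLL20} (a careful Karger--Stein variant) $N = n^{\beta k + O(1)}$ times independently and return the collected list of outputs. The goal is to show that for any fixed $s$-cut $C$ of size $c \le \beta\lambda_k$, a single trial outputs $C$ with probability $\Omega(n^{-\beta k - O(1)})$; a standard Chernoff bound over $N$ trials then ensures that $C$ appears in the list with high probability. Each trial runs in polynomial time, so the total runtime and list length are both $n^{\beta k + O(1)}$.

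Fix $C$ and consider one execution. At each contraction step on an intermediate graph with $i$ vertices and $m$ edges, the probability of losing $C$ (i.e., of contracting an edge of $C$) is exactly $c/m$. The main task is to lower bound $m$ in terms of $i$ and $\lambda_k$. Since contraction can only increase minimum $k$-cuts, every intermediate graph has min $k$-cut at least $\lambda_k$, and the refined analysis of~\cite{GHLL20} effectively yields $m \gtrsim i\lambda_k/k$, improving on the naive bound $m \ge i\lambda_k/(2(k-1))$ obtained from the $k-1$ lowest-degree singletons (which would only yield the weaker success probability $n^{-2(k-1)\beta}$ after telescoping). Substituting $c \le \beta\lambda_k$ gives a per-step survival probability of at least $1 - k\beta/i$, and the product from $i = n$ down to $i = s$ telescopes to $(s/n)^{k\beta + O(1)} = n^{-\beta k - O(1)}$, as required.

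The main obstacle is transferring the refined edge-count bound of~\cite{GHLL20}, originally stated for the min $k$-cut case $\beta = 1$, to arbitrary $s$-cuts of size $\beta\lambda_k$ with $s \le k$. Their bound, however, depends only on the minimum $k$-cut of the intermediate contracted graph and not on the size or structure of the target cut; the target cut $C$ enters only through the ratio $c/m$ in each step-wise failure probability. Thus scaling $c$ by $\beta$ simply scales the exponent by $\beta$, the recursion halts at $i = s$ rather than $i = k$ (harmlessly, since we only need $s \le k$), and the entire argument goes through to give per-trial success probability $n^{-\beta k - O(1)}$ and hence the claimed algorithm.
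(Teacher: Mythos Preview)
Your overall plan---repeat the GHLL20 contraction procedure $n^{\beta k + O(1)}$ times and collect the outputs---matches the paper's approach, and your observation that the survival bound depends on the target edge set only through the ratio $|C|/\lambda_k$ is exactly right (this is precisely the content of \Cref{llem1}). However, the execution has a real gap.

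You propose contracting all the way down to $i = s$ vertices and telescoping $\prod_{i=s+1}^{n}(1 - k\beta/i)$. But whenever $s < k\beta$ this product contains nonpositive factors and the bound is vacuous; the lemma is stated for arbitrary $2 \le s \le k$ and $\beta \le 1$, so e.g.\ $s = 2$, $\beta = 1$, $k = 10$ is allowed, and even in the paper's intended application one has $\beta k = s + 2|I|/\log n > s$. The result you are invoking, restated in the paper as \Cref{llem1}, explicitly requires halting the contraction at $\tau \ge 8\alpha k^2 + 2k$ vertices---not at $k$, and certainly not at $s$---and gives survival probability $(n/\tau)^{-\alpha k} k^{-O(\alpha k^2)}$. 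Relatedly, the GHLL20 refinement is \emph{not} a uniform per-step edge bound $m \gtrsim i\lambda_k/k$; it is an amortized analysis of the whole contraction process, and should be invoked as a black box rather than re-derived stepwise.

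The paper closes this gap as follows: set $\tau = 8\beta k^2 + 2k$, apply \Cref{llem1} to get survival probability $k^{-O(k^2)} n^{-\beta k}$ down to $\tau$ vertices, and then output a uniformly random $s$-partition of the remaining $\tau$ vertices, hitting $C$ with probability at least $s^{-\tau} \ge k^{-O(k^2)}$. Both $k^{O(k^2)}$ factors are absorbed into $n^{O(1)}$. Your plan is missing this halt-and-enumerate step.
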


Finally, we present and analyze the algorithm that extends the border by computing the missing islands in each non-singleton component. The following lemma is proved in \Cref{sec:islands}.
\begin{restatable}{lemma}{Islands}\label{lem:islands}
There is a $O_r(n^{\frac{\omega r}{3} + O(1)})$ deterministic algorithm that solves the $r$-Island problem. 
\end{restatable}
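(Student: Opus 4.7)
The plan is to mimic Ne\v{s}et\v{r}il and Poljak's matrix-multiplication-based $k$-clique algorithm. A short direct calculation shows that an $(r{+}1)$-cut with singleton set $S = \{v_1, \ldots, v_r\}$ and one residual non-singleton part $V \setminus S$ has cut size exactly $f(S) := \sum_{v \in S} \deg_G(v) - |E(G[S])|$, since an edge is cut iff at least one of its endpoints lies in $S$. So the $r$-island problem reduces to finding an $r$-subset of $V$ minimizing $f$. I will assume $r = 3t$ to simplify notation; the cases $r \not\equiv 0 \pmod{3}$ follow by the same strategy with an unbalanced partition $(t{+}1, t, t)$ or $(t{+}1, t{+}1, t)$ combined with a single call to rectangular matrix multiplication, losing at most an $O(1)$ additive slack in the exponent.

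Next I would enumerate all $N := \binom{n}{t} = O(n^t)$ many $t$-subsets $X \subseteq V$ and precompute, in $O(t^2)$ time per subset, the self-cost $a(X) := \sum_{v \in X} \deg(v) - |E(G[X])|$; for each ordered pair of disjoint $t$-subsets $(X, Y)$ I would also store the crossing count $b(X, Y) := |E_G(X, Y)| \in \{0, 1, \ldots, t^2\}$. The crucial identity, verified by direct expansion, is that for any disjoint decomposition $S = X_1 \sqcup X_2 \sqcup X_3$,
\[
 f(S) = a(X_1) + a(X_2) + a(X_3) - b(X_1, X_2) - b(X_1, X_3) - b(X_2, X_3),
\]
so the problem reduces to locating the disjoint $t$-subset triple minimizing this expression, which is a min-weight ``triangle'' in a tripartite auxiliary graph on $N$ nodes per part.

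The key leverage is that the edge weights $b(\cdot, \cdot)$ take only $O(t^2) = O(1)$ distinct values. I would iterate over all $O(t^6)$ tuples $(\beta_{12}, \beta_{13}, \beta_{23}) \in \{0, \ldots, t^2\}^3$ and, for each tuple, restrict to the sub-instance where the $i$-$j$ compatibility is ``$X_i, X_j$ disjoint with $b(X_i, X_j) = \beta_{ij}$''. Within each sub-instance the task becomes a minimum-vertex-sum triangle problem whose edges carry only boolean compatibility information. Because $a(X) \in \{0, \ldots, tn\}$ is polynomially bounded, I would invoke a bounded-weight min-plus matrix multiplication on the $N \times N$ compatibility matrices (using Zwick-style polynomial lifting, or equivalently bucketing the $t$-subsets by $a$-value into $O(\log n)$ geometric buckets per coordinate, applying boolean matrix multiplication to each bucket-triple, and then running a refinement pass over the candidate window). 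Each call runs in $\tilde O(N^\omega) \cdot n^{O(1)} = n^{\omega t + O(1)}$ time, and aggregating over the $O(1)$-many weight-tuples and $\operatorname{polylog}(n)$-many bucket-triples yields the overall bound $n^{\omega r / 3 + O(1)}$.

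The main obstacle I expect is precisely this vertex-weight minimization step: a plain boolean triangle-detection matrix multiplication yields only feasibility, not minimization of $a(X_1) + a(X_2) + a(X_3)$, and unrestricted min-plus matrix multiplication is APSP-hard. The resolution relies on the fact that after the edge-weight enumeration the $a$-values are the only remaining source of weight, and they are polynomially bounded — which is exactly what makes either the Zwick bounded-weight min-plus product or the bucketing-plus-boolean-mult strategy fit within the $n^{O(1)}$ exponent slack that the target runtime permits. The remaining details — enforcing disjointness during the matrix multiplication (e.g.\ by color-coding or direct inclusion-exclusion since $t$ is constant), executing the $r \not\equiv 0 \pmod{3}$ cases via rectangular matrix multiplication, and checking $|V \setminus S| \geq 2$ so that the residual component is genuinely non-singleton — are routine.
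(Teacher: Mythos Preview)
Your proposal is correct and follows essentially the same Ne\v{s}et\v{r}il--Poljak route as the paper: both form the auxiliary graph on $t$-subsets (with $t=r/3$), both exploit that the cross-counts $b(X_i,X_j)$ take only $O(t^2)$ values, and both reduce to matrix multiplication over $N=O(n^t)$ nodes. The only real difference is in how the vertex weights $a(X_i)$ are handled. You invoke bounded-weight min-plus (or bucketing) to minimize $a(X_1)+a(X_2)+a(X_3)$ once the $\beta_{ij}$ are fixed; the paper instead observes that each $a$-value (equivalently, its pair $(w_S,w_S^V)$) also lies in a range of size $O(tn)$, so one can simply enumerate all nine parameters --- three self-costs, three outside-degrees, three cross-counts --- at a total overhead of $O_r(n^3)$ and then do a single boolean triangle detection per guess. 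This is a bit cruder than your min-plus route but lands in the same $n^{\omega r/3+O(1)}$ bound and avoids any discussion of Zwick-style products. Conversely, your approach would shave the extra $n^3$ enumeration factor down to $\mathrm{poly}(\log n)\cdot O_r(1)$, which is irrelevant here but would matter if one cared about the $O(1)$ in the exponent. Your handling of disjointness is slightly over-engineered: pairwise disjointness of the three $t$-subsets already implies mutual disjointness, so zeroing out non-disjoint pairs in the compatibility matrix suffices, with no need for color-coding or inclusion--exclusion.
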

With these three lemmas in hand, we now analyze \Cref{alg:main}.

Fix a minimum $k$-cut. The initial Kawarabayashi-Thorup sparsification takes $k^{O(k)}n^{O(1)}$ time by \Cref{lem:kt-main}, and the border $B_{I,\sigma}$ is preserved by the partition and has size at most $\lambda_k-(1-2/\log n)|I|\lambda_k/k$. For the correct guess of $|I|$, \Cref{lem:karger} detects $B_{I,\sigma}$ with high probability among a collection of $(n/\lambda_k)^{k-(1-2/\log n)i+O(1)}$ many $(k-i)$-cuts. Finally, for the $(k-i)$-cut $C=B_{I,\sigma}$, the Island Discovery Algorithm extends it to a minimum $k$-cut in time $n^{(\omega/3)i+O(1)}$. The total running time is therefore
\[ k^{O(k)}n^{O(1)} + (n/\lambda_k)^{k-(1-2/\log n)i+O(1)} \cdot n^{(\omega/3)i+O(1)} \le k^{O(k)}n^{O(1)} + n^{\frac t{t-1}(k-i+2i/\log n+O(1))}\cdot n^{(\omega/3)i+O(1)} .\]
The $n^{2i/\log n}$ term is at most $O(1)^{2i}$, which is negligible. The running time is dominated by either $i=0$ or $i=k$, depending on which of $\frac t{t-1}$ and $\omega/3$ is greater. This concludes the analysis of \Cref{alg:main} and the proof of \Cref{thm:main}.

\section{Kawarabayashi-Thorup Sparsification}\label{sec:KT}

In this section, we prove the following Kawarabayashi-Thorup sparsification theorem of any simple graph. Rather than view it as a vertex sparsification process where groups of vertices are contracted, we work with the grouping of vertices itself, which is a partition of the vertex set. We use \emph{parts} to denote the vertex sets of the partition to distinguish them from the \emph{components} of a $k$-cut.

Most of the arguments in this section originate from Kawarabayashi and Thorup's original paper~\cite{kawarabayashi2018deterministic}, though we find it more convenient to follow the presentations of~\cite{GLL21} and~\cite{Li19}.

\KTSparsification*

\subsection{Regularization Step}

We first ``regularize'' the graph to obey a few natural conditions, which is done at no asymptotic cost to the number of clusters. In particular, we ensure that $m\le O(\lambda_kn)$, i.e., there are not too many edges, and $\delta\ge\lambda_k/k$, i.e., the minimum degree is comparable to the size of the $k$-cut.

\paragraph{Nagamochi-Ibaraki sparsification.}
First, we show that we can freely assume $m=O(\lambda_k n)$ through an initial graph sparsification step due to Nagamochi and Ibaraki; the specific theorem statement here is from \cite{Li19}.
\begin{theorem}[Nagamochi and Ibaraki~\cite{NI92}, Theorem~3.3 in \cite{Li19}]\label{thm:NIb}
Given a simple graph $G$ and parameter $s$, there is a polynomial-time algorithm that computes a subgraph $H$ with at most $sn$ edges such that all $k$-cuts of size at most $s$ are preserved. More formally, for all $k$-cuts $S_1,\ldots,S_k$ satisfying $|E_G[S_1,\ldots,S_k]|\le s$, we have $E_G[S_1,\ldots,S_k]=E_H[S_1,\ldots,S_k]$.
\end{theorem}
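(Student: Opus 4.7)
The plan is to use the classical Nagamochi--Ibaraki forest-packing construction, which was originally designed for the $k=2$ case but extends to $k$-cuts with essentially no modification. First I would build $H$ greedily as a union of $s$ edge-disjoint spanning forests. Set $F_1$ to be a maximal spanning forest of $G$, computable in linear time via BFS. For $i=2,\ldots,s$, let $F_i$ be a maximal spanning forest of $G\setminus(F_1\cup\cdots\cup F_{i-1})$, again by BFS on the current residual graph. Define $H=F_1\cup\cdots\cup F_s$. Each $F_i$ has at most $n-1$ edges, so $|E(H)|\le s(n-1)\le sn$, and the total running time is $O(sm)$, polynomial as required.

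The heart of the argument is showing that every $k$-cut of $G$ of size at most $s$ is fully preserved in $H$. Since $H\subseteq G$ we trivially have $E_H[S_1,\ldots,S_k]\subseteq E_G[S_1,\ldots,S_k]$, so it suffices to establish the reverse inclusion. Fix any $k$-cut $(S_1,\ldots,S_k)$ with $|E_G[S_1,\ldots,S_k]|\le s$, and suppose for contradiction that some crossing edge $e=(u,v)$ is absent from $H$. Because $e\notin F_i$ while $F_i$ is maximal in $G\setminus(F_1\cup\cdots\cup F_{i-1})$, adjoining $e$ to $F_i$ would create a cycle; equivalently, $u$ and $v$ lie in the same tree of $F_i$, and so $F_i$ contains a $u$-$v$ path $P_i$.

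Since $u$ and $v$ lie in different parts of $(S_1,\ldots,S_k)$, each path $P_i$ must contain at least one edge $e_i$ whose endpoints lie in distinct parts of the partition. This is the only place where the argument differs from the classical $k=2$ version, but it carries through unchanged, because any walk between two vertices in different parts of a partition must cross between some two of its parts at least once. The forests $F_1,\ldots,F_s$ are pairwise edge-disjoint by construction, so the edges $e_1,\ldots,e_s$ are all distinct; moreover none of them equals $e$, since $e\notin H$. This yields $s+1$ distinct crossing edges of $(S_1,\ldots,S_k)$ in $G$, contradicting $|E_G[S_1,\ldots,S_k]|\le s$. Hence $E_G[S_1,\ldots,S_k]\subseteq E(H)$, giving the desired equality.

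The main obstacle---if one can even call it one---is simply recognizing that the classical proof uses no property of cuts beyond ``$u$ and $v$ are separated,'' which is preserved verbatim in the $k$-cut setting; no finer forest-packing or partition-specific counting argument is needed, which is why the same Nagamochi--Ibaraki algorithm serves as the initial sparsification step here.
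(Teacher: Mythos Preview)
Your proof is correct and is exactly the standard Nagamochi--Ibaraki forest-packing argument, extended verbatim to $k$-cuts. Note, however, that the paper does not give its own proof of this statement: it is quoted as a known result (attributed to Nagamochi--Ibaraki and to Theorem~3.3 of~\cite{Li19}) and used as a black box, so there is no in-paper proof to compare against.
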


Compute a $(1+1/k)$-approximation $\tilde\lambda_k\in[\lambda_k,(1+1/k)\lambda_k]$ in time $k^{O(k)}n^{O(1)}$~\cite{lokshtanov2020parameterized}, apply \Cref{thm:NIb} with parameter $s=\tilde\lambda_k$, and replace $G$ with the returned graph $H$. This allows us to assume $m\le (1+1/k)\lambda_k n$ henceforth.

\paragraph{Lower bound the minimum degree.}
Next, we would like to ensure that the graph $G$ has minimum degree comparable to $\lambda_k$. While there exists a vertex of degree less than $\frac{\tilde\lambda_k}{(1+1/k)(k-1)}$, declare that vertex as a trivial part in the final partition, and remove it from $G$. We claim that we can remove at most $k-1$ such vertices; otherwise, the vertices together form a $k$-cut of size less than $(k-1)\cdot\frac{\tilde\lambda_k}{(1+1/k)(k-1)}=\frac{\tilde\lambda_k}{(1+1/k)}\le\lambda_k$, contradicting the value $\lambda_k$ of the minimum $k$-cut. We have thus removed at most $k-1$ vertices. The remaining task is to compute a partition of the remaining graph which has minimum degree at least $\frac{\lambda_k}{(1+1/k)(k-1)} \ge \lambda_k/k$. We then add a singleton set for each of the singleton vertices removed, which is at most $k-1$ extra parts, which is negligible since we aim for $(k\log n)^{O(1)}n/\lambda_k$ many parts in total.

\subsection{Kawarabayashi-Thorup Sparsification}
It remains to prove the following lemma, which is \Cref{lem:kt-main} with the additional assumptions $m\le2\lambda_k n$ and $\delta\ge\lambda_k/k$.
\begin{lemma}\label{lem:kt}
Suppose we are given a simple graph with $m\le2\lambda_k n$ and $\delta\ge\lambda_k/k$. Then, we can compute a partition $V_1,\ldots,V_q$ of $V$ such that $q=(k\log n)^{O(1)}n/\lambda_k$ and the following holds:
 \begin{enumerate}
 \item[$(*)$] For any minimum $k$-cut $C$ with exactly $r$ singleton components, there exists $I\subset [r]$ and a function $\sigma:I \to[k]\setminus[r]$ such that the border of $C$ defined by $I$ and $\sigma$, namely $B_{I, \sigma}$, agrees with the partition $V_1,\ldots,V_q$. In other words, all edges of $B_{I, \sigma}$ are between some pair of parts $V_i, V_j$. Moreover, we have $|B_{I, \sigma}|\le\lambda_k-(1-2/\log n)|I|\lambda_k/k$. \end{enumerate}
\end{lemma}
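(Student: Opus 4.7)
My plan is to build the partition by running the Kawarabayashi--Thorup framework essentially as presented in \cite{Li19} and \cite{GLL21}, then to define the border by inspecting how the minimum $k$-cut sits inside the resulting parts. The construction would iteratively apply an expander decomposition with conductance parameter $\phi = 1/(k\log n)^{O(1)}$, shave/trim low-degree vertices, and contract each remaining cluster into a part. Using the hypotheses $m \le 2\lambda_k n$ and $\delta \ge \lambda_k/k$, a standard volume-counting argument (each non-shaved part has volume $\Omega(\lambda_k)$ with $\mathrm{poly}(k,\log n)$ slack, and only $(k\log n)^{O(1)} n/\lambda_k$ shaved vertices survive as singleton parts) gives the desired bound $q = (k\log n)^{O(1)} n/\lambda_k$. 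The output of this step is a partition $V_1, \ldots, V_q$ whose clusters have guaranteed inner conductance $\ge \phi$ and whose vertex boundaries are controlled.

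Now fix a minimum $k$-cut $C$ with singletons $S_1=\{v_1\},\ldots,S_r=\{v_r\}$ and larger components $S_{r+1},\ldots,S_k$. The key structural claim I would establish is that no part $V_\ell$ can split two \emph{non}-singleton components: if $V_\ell$ intersected both $S_i$ and $S_{i'}$ with $i, i' > r$ nontrivially, then cutting $V_\ell$ along $S_i \cap V_\ell$ would contribute few edges (by the inner conductance of $V_\ell$) while the rest of $C$ restricted to $V \setminus V_\ell$ is already a valid $k$-partition; combining these, one exhibits a $k$-cut of size strictly less than $\lambda_k$, a contradiction. (This is exactly the $k$-cut variant of the classical KT argument, extended as in \cite{Li19}.) Consequently, each part either (i) lies entirely in one $S_i$, or (ii) equals a single singleton $\{v_j\}$, or (iii) contains vertices of exactly one non-singleton component $S_i$ together with one or more absorbed singletons $v_j$. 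Define $I \subseteq [r]$ as the set of indices $j$ falling into case (iii), and let $\sigma(j) = i$ for the corresponding non-singleton host. The border $B_{I,\sigma}$ then has the property that each $V_\ell$ is entirely contained in one of its components, i.e., $B_{I,\sigma}$ agrees with the partition.

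The main obstacle I foresee is the quantitative size bound $|B_{I,\sigma}| \le \lambda_k - (1 - 2/\log n)|I|\lambda_k/k$. Absorbing $v_j$ into $S_{\sigma(j)}$ reduces the cut size by exactly the number of edges from $v_j$ to $S_{\sigma(j)}$, so I need to show that each absorbed $v_j$ sends $\ge (1-2/\log n)\lambda_k/k$ edges into its host component. Two ingredients drive this: since $v_j$ is a singleton component of $C$, all of $\deg(v_j)$ edges cross $C$, so $\deg(v_j) \le \lambda_k$; and since $v_j$ is a singleton component, adding $v_j$ as its own part to the border can only increase the cut by $\deg(v_j) \le \lambda_k$, but we want a better bound. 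The refined bound comes from the conductance guarantee on the part $V_\ell$ containing $v_j$: if $v_j$ sent fewer than $(1-2/\log n)\lambda_k/k$ edges into $V_\ell$, then the singleton side $\{v_j\}$ inside $V_\ell$ would violate the inner expansion property of $V_\ell$, because the minimum degree hypothesis $\delta \ge \lambda_k/k$ forces $v_j$'s degree to be at least $\lambda_k/k$ to begin with. Choosing $\phi = \Theta(1/\log n)$ in the decomposition gives exactly the claimed $(1-2/\log n)$ factor. Summing over $j \in I$ yields the stated bound, completing the proof of \Cref{lem:kt}; then the wrapper in the preceding regularization subsection upgrades it to \Cref{lem:kt-main}.
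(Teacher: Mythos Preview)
Your framework (expander decomposition, trim, shave, form the partition from cores plus singletons) and your definition of $I$ and $\sigma$ match the paper. The gap is in how you justify the size bound $|B_{I,\sigma}|\le\lambda_k-(1-2/\log n)|I|\lambda_k/k$.

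First, the conductance argument you give for ``$v_j$ sends at least $(1-2/\log n)\lambda_k/k$ edges into $V_\ell$'' does not work: the conductance of a singleton $\{v_j\}$ inside $G[V_\ell]$ is trivially $1$, so inner expansion yields no nontrivial lower bound on the number of $v_j$'s neighbors that lie in $V_\ell$. The correct source of this bound is the shaving criterion itself: $v_j$ belongs to the core only because $\deg_{G[C]}(v_j)\ge(1-\epsilon)\deg(v_j)$ for its cluster $C$, and that inequality is exactly what you need. You mention shaving in the construction but then switch to a conductance argument that does not deliver the bound. Second, and more seriously, even granting $\ge(1-\epsilon)\deg(v_j)$ neighbors in the cluster, what you actually need is neighbors in the host component $S_{\sigma(j)}$, and you do not bridge this. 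The paper closes the gap with a separate structural lemma: for any $S$ with $|\partial S|\le\alpha\delta$, each cluster $C$ satisfies $|C\cap S|\le 3\alpha$ or $|C\setminus S|\le 3\alpha$; hence all components other than one ``dominant'' $S^*$ contribute at most $3\alpha k$ vertices to the cluster, so at most $3\alpha k$ of $v_j$'s cluster-neighbors can lie outside $S^*$. Only after subtracting this $3\alpha k$ do you get $\ge(1-2\epsilon)\deg(v_j)\ge(1-2\epsilon)\lambda_k/k$ neighbors in $S_{\sigma(j)}$, which yields the claimed per-island savings.

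A related issue affects your structural claim that no part splits two non-singleton components: you say ``cutting $V_\ell$ along $S_i\cap V_\ell$ would contribute few edges (by inner conductance),'' but conductance \emph{lower}-bounds cut sizes inside an expander, so the direction is backwards. The paper's argument is instead a single-vertex exchange: if $v\in A\cap S$ with $S$ non-singleton and $S\ne S^*$, move $v$ to $S^*$; the same neighbor-count bound above shows the new $k$-cut is strictly smaller, contradicting minimality.
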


Our treatment follows closely from Appendix~B of~\cite{GLL21}.

\paragraph{Expander decomposition preliminaries.}
We first introduce the concept of the \emph{conductance} of a graph, as well as an expander, defined below.
\begin{definition}[Conductance]
Given a graph $G=(V,E)$, a set $S:\emptyset\subsetneq S \subsetneq V$ has \emph{conductance} \[ \frac {|\partial_GS|} {\min\{\textbf{\textup{vol}}(S),\textbf{\textup{vol}}(V\setminus S) \}} \]
in the graph $G$, where $\textbf{\textup{vol}}(S):=\sum_{v\in S}\deg(v)$.
The \emph{conductance} of the graph $G$ is the minimum conductance of a set $S\subseteq V$ in $G$.
\end{definition}
\begin{definition}
For any parameter $0<\gamma\le1$, a graph is a $\gamma$-expander if its conductance is at least $\gamma$.
\end{definition}

The following is a well-known result about decomposing a graph into expanders, for which we provide an easy proof below for convenience.
\begin{theorem}[Expander Decomposition]\label{thm:exp-decomp}
For any graph $G=(V,E)$ with $m$ edges and a parameter $\gamma<1$, there exists a partition $U_1,\ldots,U_p$ of $V$ such that:
 \begin{enumerate}
 \item For all $i\in[p]$, $G[U_i]$ is a $\gamma$-expander.
 \item $|E[U_1,\ldots,U_p]| \le O(\gamma m\log m)$.
 \end{enumerate}
\end{theorem}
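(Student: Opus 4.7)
The plan is to prove Theorem~3.5 by the standard recursive sparse-cut decomposition. The algorithm is: if $G$ is already a $\gamma$-expander, return the trivial partition $\{V\}$; otherwise, by the definition of conductance there exists a cut $(S, V\setminus S)$ with $|\partial_G S| < \gamma \min\{\textbf{\textup{vol}}(S),\textbf{\textup{vol}}(V\setminus S)\}$, which we find by brute-force search, and we recurse on the induced subgraphs $G[S]$ and $G[V\setminus S]$ and return the union of the two partitions. Property~1 is immediate, since a part $U_i$ is declared final only when $G[U_i]$ is itself a $\gamma$-expander.

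For the edge bound in Property~2, I would proceed by strong induction on $m = |E(G)|$, aiming to show that the number of inter-cluster edges produced is at most $c\gamma m \log_2 m$ for a small absolute constant $c$ (it suffices to take $c = 4$ under the harmless assumption $\gamma \leq 1/2$; larger $\gamma$ is handled by just outputting singletons, which gives at most $m \leq 2\gamma m\log_2 m$ cut edges). At a recursion step, write $m_S := |E(G[S])|$ and $m_{\bar S} := |E(G[V\setminus S])|$, and assume WLOG $\textbf{\textup{vol}}(S) \leq \textbf{\textup{vol}}(V\setminus S)$. The identity $\textbf{\textup{vol}}(S) = 2m_S + |\partial S|$ then gives $m_S \leq m/2$, while the conductance bound rearranges to $|\partial S| \leq \frac{2\gamma}{1-\gamma} m_S$. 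Plugging the inductive hypothesis into $T(m) \leq T(m_S) + T(m_{\bar S}) + |\partial S|$ and using $\log_2 m_S \leq \log_2 m - 1$, the calculation
\[
T(m) \;\leq\; c\gamma m_S(\log_2 m - 1) + c\gamma m_{\bar S}\log_2 m + |\partial S| \;\leq\; c\gamma m \log_2 m - c\gamma m_S + \frac{2\gamma}{1-\gamma}\, m_S
\]
closes the induction as soon as $c$ is chosen comparable to $\frac{2}{1-\gamma}$.

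The only subtle point I anticipate is that when we recurse on $G[U_i]$, the degrees, volumes, and conductance are all measured inside the subgraph $G[U_i]$ rather than inside the original $G$. This is consistent with the conclusion we want (each $G[U_i]$ is intrinsically a $\gamma$-expander), but the inductive quantity $T(\cdot)$ must be read relative to the current subgraph at each recursive call, and the final ``inter-cluster edges'' in $G$ are precisely the edges cut across all recursive splits. Beyond this bookkeeping and the routine algebra above, I do not anticipate any serious obstacle; this is essentially the textbook expander-decomposition argument.
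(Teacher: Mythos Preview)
Your argument is the standard recursive sparse-cut decomposition and is correct; the induction closes exactly as you say once $c \ge \tfrac{2}{1-\gamma}$, and the base/degenerate cases (empty subgraphs, isolated vertices) are easily absorbed into the $O(\cdot)$. Note, however, that despite the sentence ``for which we provide an easy proof below for convenience,'' the paper as given does \emph{not} actually include a proof of Theorem~\ref{thm:exp-decomp}---the text jumps straight to the partitioning algorithm---so there is nothing concrete to compare your approach against. Your write-up is precisely the ``easy proof'' one would expect in that slot.
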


\paragraph{The partitioning algorithm.}
To compute the partition $V_1,\ldots,V_q$, we execute the same algorithm from Section~B~of~\cite{GLL21}, except we add an additional step~4. Throughout the algorithm, we fix parameter $\epsilon:=1/(k\log n)$.
\begin{enumerate}
\item Compute an expander decomposition with parameter $\gamma:=1/\delta$, and let $U_1,\ldots,U_p$ be the resulting partition of $V$.
\item Initialize the set $S\gets\emptyset$, and initialize $C_i\gets U_i$ for each $i\in[p]$. While there exists some $i\in[p]$ and a vertex $v\in C_i$ satisfying $\deg_{G[C_i]}(v) \le \frac25\deg_G(v)$, i.e., vertex $v$ loses at least $\frac35$ fraction of its degree when restricted to the current $C_i$, remove $v$ from $C_i$ and add it to $S$. The set $S$ is called the set of \emph{singleton} vertices. Note that some $C_i$ can become empty after this procedure. At this point, we call each $C_i$\ a \emph{cluster} of the graph. This procedure is called the \emph{trimming} step in~\cite{kawarabayashi2018deterministic}. 
\item Initialize the set $L := \bigcup_{i\in[p]}\{v\in C_i \mid \deg_{G[C_i]}(v) \le (1-\epsilon)\deg_G(v)\}$, i.e., for each $i\in[p]$ and vertex $v\in C_i$ that loses at least $\epsilon$ fraction of its degree when restricted to $C_i$, add $v$ to $L$ (but do not remove it from $C_i$ yet). Then, add $L$ to the singletons $S$ (i.e., update $S\gets S\cup L$) and define the \emph{core} of a cluster $C_i$ as $A_i\gets C_i\setminus L$. For a given core $A_i$, let $C(A_i)$ denote the cluster whose core is $A_i$. This procedure is called the \emph{shaving} step in~\cite{kawarabayashi2018deterministic}.

\item For each core $A_i$ with at most $k$ vertices, we \emph{shatter} the core by adding $A_i$ to the singletons $S$ (i.e., update $S\gets S\cup A_i$) and updating $A_i\gets\emptyset$. This is the only additional step relative to~\cite{GLL21}.

\item Suppose there are $p'\le p$ nonempty cores $A_i$. Let us re-order the cores $A_1,\ldots,A_p$ so that $A_1,\ldots,A_{p'}$ are precisely the nonempty cores. The final partition $\mathcal P=\{V_1,V_2,\ldots\}$ of $V$ is $\bigcup_{i\in[p']}\{A_i\} \cup \bigcup_{v\in S}\{\{v\}\}$. In other words, we take each nonempty core $A_i$ as its own set in the partition, and add each vertex $v\in S$ as a singleton set. We call each nonempty core $A_i$ a \emph{core} in the partition, and each vertex $v\in S$ as a \emph{singleton} in the partition.
\end{enumerate}

The lemmas below are stated identically to those in~\cite{GLL21}, so we omit the proofs and direct interested readers to~\cite{GLL21}.



\begin{lemma}[Lemma~B.11 of~\cite{GLL21}]\label{lem:S-small}
Fix a parameter $\alpha\ge1$ that satisfies $\alpha<o(\delta/\log n)$.
For each nonempty cluster $C$ and a subset $S\subseteq V$ satisfying $|\partial _GS|\le\alpha\delta$, we have either $|C\cap S|\le3\alpha$ or $|C\setminus S|\le3\alpha$.
\end{lemma}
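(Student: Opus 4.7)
The plan is to combine the two properties of the cluster $C$: it lies inside an expander piece $G[U]$ of the expander decomposition, which has conductance at least $1/\delta$, and after the trimming step every vertex $v \in C$ satisfies $\deg_{G[C]}(v) \ge \tfrac25 \deg_G(v) \ge \tfrac25 \delta$. By symmetry, it suffices to bound $t := |C \cap S|$, under the assumption that the conclusion $|C \setminus S| \le 3\alpha$ fails.

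The first step is a direct edge count inside $G[C]$. Every edge from $C \cap S$ to $C \setminus S$ in $G[C]$ lies in $\partial_G S$, so there are at most $\alpha \delta$ of them. Simplicity of $G$ bounds the internal edges by $|E(G[C \cap S])| \le \binom{t}{2} < t^2/2$, while the trimming bound gives $\mathrm{vol}_{G[C]}(C \cap S) \ge \tfrac{2}{5}\delta\, t$. Summing:
\[
\tfrac{2}{5}\delta\, t \;\le\; 2|E(G[C \cap S])| + |E_{G[C]}(C \cap S, C \setminus S)| \;\le\; t^2 + \alpha\delta ,
\]
i.e.\ $t^2 - \tfrac{2\delta}{5} t + \alpha\delta \ge 0$. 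Because $\alpha < o(\delta/\log n) \ll \delta$, the two roots of this quadratic are approximately $\tfrac{5\alpha}{2}$ and $\tfrac{2\delta}{5} - O(\alpha)$, so either $t \le \tfrac{5\alpha}{2} < 3\alpha$ (which is the desired conclusion), or $t \ge \tfrac{2\delta}{5} - O(\alpha) = \Omega(\delta)$. Applying the same dichotomy to $|C \setminus S|$, we are done unless both $|C \cap S|, |C \setminus S| = \Omega(\delta)$.

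To rule out this residual case, I would invoke the conductance of $G[U]$. Using $\deg_{G[U]}(v) \ge \deg_{G[C]}(v) \ge \tfrac25\delta$ for every $v \in C$, each of $\mathrm{vol}_{G[U]}(U \cap S)$ and $\mathrm{vol}_{G[U]}(U \setminus S)$ is at least $\Omega(\delta^2)$, so the $(1/\delta)$-expansion of $G[U]$ yields $|\partial_{G[U]}(U \cap S)| = \Omega(\delta)$. Combined with $|\partial_{G[U]}(U\cap S)| \le |\partial_G S| \le \alpha\delta$ and the fact that $|U \setminus C| \le O(|\partial_G U|/\delta)$ is tiny (this last bound follows from the standard trimming potential argument: each removal of a vertex reduces the cut $|\partial_G C|$ by at least $\tfrac{1}{5}\delta$), one can push the edge-count inequality from $C \cap S$ up to the slightly larger set $U \cap S$ to extract a genuine contradiction.

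The main obstacle I anticipate is this last step. The worst-case conductance $\Omega(1/\delta)$ of $G[U]$ gives a cut lower bound of order $\delta$, which is only a factor $\alpha$ away from the hypothesis; to turn this near-contradiction into a real one I need the hypothesis $\alpha < o(\delta/\log n)$ together with the trimming bound on $|U \setminus C|$, so that most of the volume of $U$ on each side of $S$ still comes from $C$, and the simple-graph count on $U \cap S$ forces $|C \cap S| \le 3\alpha$ or $|C \setminus S| \le 3\alpha$.
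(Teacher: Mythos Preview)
The paper does not give its own proof of this lemma; it is quoted from~\cite{GLL21} and the reader is explicitly referred there. So there is no in-paper argument to compare against, and I will just comment on your plan.

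Your first step is correct and is exactly the core of the standard argument: the trimming guarantee $\deg_{G[C]}(v)>\tfrac25\deg_G(v)\ge\tfrac25\delta$ together with simplicity gives
\[
\tfrac25\delta\,t \;<\; t(t-1)+|E_{G[C]}(C\cap S,\,C\setminus S)| \;\le\; t(t-1)+\alpha\delta,
\]
whose roots are $\tfrac{5\alpha}{2}(1+o(1))$ and $\tfrac{2\delta}{5}-O(\alpha)$, yielding the dichotomy $t<3\alpha$ or $t=\Omega(\delta)$.

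The gap is in how you propose to eliminate the residual case $|C\cap S|,\,|C\setminus S|=\Omega(\delta)$. The $(1/\delta)$-conductance of $G[U]$ only gives
\[
|\partial_{G[U]}(U\cap S)|\;\ge\;\tfrac{1}{\delta}\cdot\Omega(\delta^2)\;=\;\Omega(\delta),
\]
and this does \emph{not} contradict $|\partial_G S|\le\alpha\delta$, because $\alpha\ge1$; the gap is a fixed constant factor, and the hypothesis $\alpha=o(\delta/\log n)$ controls only the ratio $\alpha/\delta$, not the size of $\alpha$ itself. Your suggested patch---lifting the edge count from $C\cap S$ to $U\cap S$ using that $|U\setminus C|\le O(|\partial_G U|/\delta)$ is ``tiny''---does not go through either: you have no per-piece bound on $|\partial_G U|$ (the expander decomposition only bounds $\sum_i|\partial_G U_i|$), so $|U\setminus C|$ need not be small, and in any case vertices of $U\setminus C$ carry no lower bound on $\deg_{G[U]}$, so the quadratic count on $U\cap S$ would not reproduce the needed inequality. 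As written, the plan for the bad case is not a proof; you will need a sharper use of the expander/trimming interaction (or to follow the constants in the actual~\cite{GLL21} argument) to close it.
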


The lemma below from~\cite{GLL21} is true for the algorithm without step~4.
\begin{lemma}[Corollary~B.9 of~\cite{GLL21}]
Suppose we skip step~4 of the algorithm. Then, there are $O(\frac{ m\log m}{\delta^2})$ many sets in the partition $\mathcal P$.
\end{lemma}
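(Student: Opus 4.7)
The plan is to decompose $\mathcal{P}$ into three contributions---the nonempty cores $A_i$, the trimmed singletons added to $S$ in step~2, and the shaved singletons added in step~3---and bound each separately using the edge budget provided by the expander decomposition.

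First I would invoke \Cref{thm:exp-decomp} with $\gamma := 1/\delta$ to obtain $U_1,\ldots,U_p$. The headline consequence is
\[
\sum_{i=1}^p |\partial_G U_i| \;=\; 2\,|E[U_1,\ldots,U_p]| \;\le\; O(\gamma m \log m) \;=\; O(m\log m/\delta),
\]
which is the ``currency'' that drives every subsequent count. As a warm-up, the number of nonempty cores satisfies $p' = O(n/\delta)$: any nonempty core $A_i$ contains some vertex $v$ with $\deg_{G[C_i]}(v) \ge (1-\epsilon)\deg_G(v) \ge (1-\epsilon)\delta$, forcing $|C_i| \ge \Omega(\delta)$. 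Using $m \ge n\delta/2$ (sum of degrees), this $O(n/\delta)$ is already absorbed into the $O(m\log m/\delta^2)$ target.

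Next, for trimming, the natural device is a potential $\Phi_i := |\partial_G C_i|$ that starts at $|\partial_G U_i|$. The trimming rule $\deg_{G[C_i]}(v) \le \tfrac25 \deg_G(v)$ guarantees that when $v$ leaves $C_i$, at least $\tfrac35\deg(v)$ edges are subtracted from $\Phi_i$ (edges from $v$ to $V\setminus C_i$ that are no longer boundary-crossing) while at most $\tfrac25\deg(v)$ are added (edges from $v$ to the remaining part of $C_i$). Hence each trim decreases $\Phi_i$ by at least $\tfrac15\deg(v) \ge \tfrac15\delta$, and since $\Phi_i \ge 0$ throughout, the number of vertices trimmed out of $U_i$ is $O(|\partial_G U_i|/\delta)$. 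Summing across $i$ and invoking the edge budget yields $O(m\log m/\delta^2)$ trimmed singletons.

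Finally, for shaving, I would charge each shaved vertex $v\in C_i$ to its $\ge \epsilon\deg(v) \ge \epsilon\delta$ incident edges that cross $\partial_G C_i$. Since these edge-endpoints are disjoint across distinct shaved vertices inside a common $C_i$, and since $|\partial_G C_i|$ after trimming is still at most $|\partial_G U_i|$ (trimming only decreased $\Phi_i$), summation gives a total shaved count of $O\bigl(\sum_i |\partial_G U_i|/(\epsilon\delta)\bigr) = O(m\log m/(\epsilon\delta^2))$. Adding the three contributions gives the lemma, modulo the $1/\epsilon = k\log n$ factor, which is consistent with the $(k\log n)^{O(1)}n/\lambda_k$ target of \Cref{lem:kt-main}; the tighter accounting in~\cite{GLL21} removes the extra $\epsilon$ dependence. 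The delicate part I anticipate is the trimming potential argument: one must check that the $\tfrac25$--$\tfrac35$ split in the trimming criterion is exactly what forces $\Phi_i$ to be monotonically nonincreasing, so that the single expander edge budget actually controls both the trimming and (indirectly, via the preserved bound on $|\partial_G C_i|$) the shaving counts.
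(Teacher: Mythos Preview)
The paper does not actually prove this lemma: it states it as Corollary~B.9 of~\cite{GLL21} and explicitly omits the proof, so there is no in-paper argument to compare against. Your three-way decomposition (nonempty cores, trimmed singletons, shaved singletons) together with the boundary-potential argument for trimming is exactly the standard analysis, and your verification that each trim drops $|\partial_G C_i|$ by at least $\tfrac15\deg(v)\ge\tfrac15\delta$ is correct and is the key point.

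The one place where your sketch falls short of the \emph{literal} statement is the shaving count: your charging argument yields $O(m\log m/(\epsilon\delta^2))$, not $O(m\log m/\delta^2)$, and you candidly note this. With the paper's choice $\epsilon=1/(k\log n)$ that is an extra $k\log n$ factor. This is immaterial for the paper's purposes, since the downstream bound in \Cref{lem:kt} is only $q=(k\log n)^{O(1)}n/\lambda_k$ and your slack is absorbed there; but it does mean you have not reproduced the stated $O(m\log m/\delta^2)$ exactly. The statement is quoted verbatim from~\cite{GLL21}, where the shaving parameter may well be an absolute constant, which would make your argument tight as written. In short: your approach is the right one and suffices for everything the present paper needs, but be aware that the $\epsilon$-free bound as stated is borrowed from a setting where $\epsilon$ is (or can be taken to be) $\Theta(1)$.
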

Clearly, adding step~4 increases the number of parts by a factor of at most $k$, so the we obtain the following corollary.
\begin{corollary}
There are $O(\frac{km\log m}{\delta^2})$ many sets in the partition $\mathcal P$.
\end{corollary}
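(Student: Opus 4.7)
The plan is to reduce directly to the previous lemma's bound and absorb the effect of step~4 into a multiplicative factor of $k$. Starting from Corollary~B.9 of~\cite{GLL21}, which applies to the partition produced by running steps~1--3 followed by step~5 (i.e., skipping step~4) and gives a bound of $O(m\log m/\delta^2)$ on the number of parts, the only thing left to do is account for the additional splitting caused by step~4.

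Next, I would observe that step~4 only touches nonempty cores of size at most $k$: each such core is removed from the collection of parts and replaced by at most $k$ singleton sets (one for each of its vertices). Cores of size greater than $k$ are untouched, and the singletons already placed into $S$ by the trimming and shaving steps (steps~2--3) are not modified. Therefore each affected core contributes at most $k$ parts to $\mathcal{P}$ in place of the single part it would have contributed before step~4, while every other part survives unchanged. Multiplying through, $|\mathcal{P}| \le k \cdot O(m\log m / \delta^2) = O(km\log m/\delta^2)$.

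The only real obstacle is confirming that this accounting is complete: step~4 cannot create parts other than by shattering small cores, and it never inspects or subdivides vertices already in $S$. Since the description of step~4 explicitly restricts attention to cores $A_i$ with $|A_i|\le k$, and only adds singletons to $S$ while zeroing out the corresponding core, this is immediate, and the bound follows.
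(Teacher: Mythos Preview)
Your argument is correct and matches the paper's own reasoning exactly: the paper simply notes that adding step~4 increases the number of parts by a factor of at most $k$, and you have spelled out precisely why. There is nothing to add.
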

Since $m\le\lambda_k n$ and $\delta\ge\lambda_k/k$ by the assumption of \Cref{lem:kt}, this fulfills the bound $q=(k\log n)^{O(1)}n/\lambda_k$ of \Cref{lem:kt}. For the rest of this section, we prove property~$(*)$.

The following lemma is a combination of Lemma~B.12 of~\cite{GLL21} and Lemma~16 of~\cite{Li19}, and we provide a proof for completeness.
\begin{lemma}\label{lem:unique-side}
Fix a parameter $\alpha\ge1$ that satisfies $\alpha<o(\frac\delta{k\log n})$. For any nonempty core $A$ and any minimum $k$-cut of size at most $\alpha\delta$, there is exactly one component $S^*$ satisfying $|S^*\cap C(A)|>3\alpha$, and any other component $S$ that is non-singleton must be disjoint from $A$.
Moreover, each vertex $v\in A$ has at least $(1-2\epsilon)\deg(v)$ neighbors in $S^*$.
\end{lemma}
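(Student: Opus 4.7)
The plan is to proceed in three stages: first derive the gross structure of each minimum-cut component $S_i$ relative to the cluster $C(A)$ via \Cref{lem:S-small}; next establish the ``moreover'' neighbor-count from the defining property of the core; and finally conclude the singleton/disjointness claim by a local swap argument that invokes minimality of the $k$-cut.

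I would begin by observing that after the trimming step every $v\in C(A)$ has $\deg_{G[C(A)]}(v) > \tfrac{2}{5}\deg_G(v) \ge \tfrac{2\delta}{5}$, so $|C(A)| > 2\delta/5$. Since $\alpha = o(\delta/(k\log n))$ this makes $|C(A)|$ much larger than both $6\alpha$ and $3\alpha k$. Applying \Cref{lem:S-small} to each component $S_i$ with the cluster $C(A)$ gives, for every $i$, either $|C(A)\cap S_i|\le 3\alpha$ or $|C(A)\setminus S_i|\le 3\alpha$. Exactly one $S_i$ falls into the second alternative: existence holds because otherwise $|C(A)|=\sum_i |C(A)\cap S_i|\le 3\alpha k$, and uniqueness holds because two such indices $i,j$ would force $|C(A)\cap S_i|\le|C(A)\setminus S_j|\le 3\alpha$, hence $|C(A)|\le 6\alpha$, each contradicting the lower bound on $|C(A)|$. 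Denote this distinguished component by $S^*$, so that $|C(A)\cap S_j|\le 3\alpha$ for every $j\ne *$. For the ``moreover'' statement, fix $v\in A$. Being in the core means $v$ has more than $(1-\epsilon)\deg(v)$ neighbors inside $C(A)$, and at most $|C(A)\setminus S^*|\le 3\alpha$ of these lie outside $S^*$. The hypothesis on $\alpha$ combined with $\deg(v)\ge\delta$ yields $3\alpha\le\epsilon\deg(v)$, so $v$ has at least $(1-\epsilon)\deg(v)-3\alpha\ge(1-2\epsilon)\deg(v)$ neighbors in $S^*$.

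For the disjointness claim, suppose for contradiction that some non-singleton component $S_j\ne S^*$ contains a vertex $v\in A$, and consider the swap that moves $v$ from $S_j$ into $S^*$. Since $S_j$ is non-singleton, $S_j\setminus\{v\}$ is still non-empty, so the result is another $k$-cut, and a direct accounting shows its size changes by exactly $|N(v)\cap S_j|-|N(v)\cap S^*|$. Minimality therefore forces $|N(v)\cap S_j|\ge|N(v)\cap S^*|\ge(1-2\epsilon)\deg(v)$. On the other hand, the neighbors of $v$ in $S_j$ split into those inside $C(A)\cap S_j$ (at most $|C(A)\cap S_j|-1\le 3\alpha-1$) and those outside $C(A)$ (at most $\epsilon\deg(v)$, since $v\in A$), giving $|N(v)\cap S_j|<3\alpha+\epsilon\deg(v)\le 2\epsilon\deg(v)$, which contradicts the previous inequality as soon as $\epsilon<1/4$.

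The step I expect to be most delicate is the final swap: it crucially needs $S_j\setminus\{v\}$ to remain non-empty (which is exactly why the statement restricts to non-singleton $S_j$) and leverages minimality through only a single local move. Keeping the parameter bookkeeping clean---in particular verifying $3\alpha\le\epsilon\delta$, which is precisely what $\alpha=o(\delta/(k\log n))$ and $\epsilon=1/(k\log n)$ are engineered to ensure---is where the remaining care is required.
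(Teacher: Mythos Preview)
Your proposal is correct and follows essentially the same approach as the paper: apply \Cref{lem:S-small} to each component to get the dichotomy, count to find the unique $S^*$, use the core property to get the $(1-2\epsilon)\deg(v)$ neighbor bound, and finish with the swap-into-$S^*$ argument exploiting minimality. The only (inessential) difference is that you bound the number of $v$'s cluster-neighbors outside $S^*$ directly by $|C(A)\setminus S^*|\le 3\alpha$, whereas the paper uses the looser $\sum_{S\ne S^*}|C(A)\cap S|\le 3\alpha k$; both suffice under the hypothesis $\alpha=o(\delta/(k\log n))$.
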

\begin{proof}
We first show that $|C(A)|>3\alpha k$. Since $C(A)$ is nonempty, each vertex $v\in C(A)$ has at least $\frac25\deg(v)\ge\frac25\delta$ neighbors in $C(A)$, so $|C(A)|\ge\frac25\delta-1>3\alpha k$ by the assumption $\alpha<o(\frac\delta{k\log n})$. 

By \Cref{lem:S-small}, each component $S$ must satisfy $|C(A)\cap S|\le3\alpha$ or $|C(A)\setminus S|\le3\alpha$, and the latter implies that $|C(A)\cap S|>|C(A)|/2$, which only one side $S$ can satisfy. Moreover, one such component $S^*$ must exist since otherwise, $|C(A)|=\sum_S|C(A)\cap S|\le3\alpha k$, a contradiction. Therefore, all but one component $S^*$ satisfy $|C(A)\cap S|\le3\alpha$. 

Next, each vertex $v\in A$ has at least $(1-\epsilon)\deg(v)$ neighbors in $C(A)$, and at most $3\alpha k$ of them can go to $C(A)\cap S$ for any component $S\ne S^*$. This leaves at least $(1-\epsilon)\deg(v)-3\alpha k$ neighbors in $S^*$, which is at least $(1-2\epsilon)\deg(v)$ since $\epsilon=1/\log n$ and $\alpha<o(\frac\delta{k\log n})$.

We now show that if $S$ is non-singleton and $|C(A)\cap S|\le3\alpha$, then $S$ is disjoint from $A$. Suppose otherwise; then, any vertex $v\in A\cap S$ has at least $(1-2\epsilon)\deg(v)$ neighbors in $S^*$ as before. If we move $v$ from $S$ to $S^*$, then the result is still a $k$-cut since $S$ is non-singleton. Moreover, the edges from $v$ to $S$ are newly cut, and the edges from $v$ to $S^*$ are saved. The former is at most $\epsilon\deg(v)+3\alpha$, and the latter at least $(1-2\epsilon)\deg(v)$. Since $\epsilon=1/\log n$ and $\alpha<o(\frac\delta{k\log n})$, the new $k$-cut is smaller than the old one, a contradiction.
\end{proof}
Finally, we prove property~$(*)$ of \Cref{lem:kt}.
\begin{lemma}\label{lem:border-size-bound}
For any minimum $k$-cut $C$ with exactly $r$ singleton components, there exists $I\subset [r]$ and a function $\sigma:I \to[k]\setminus[r]$ such that the border of $C$ defined by $I$ and $\sigma$, namely $B_{I, \sigma}$, agrees with the partition $V_1,\ldots,V_q$. In other words, all edges of $B_{I, \sigma}$ are between some pair of parts $V_i, V_j$. Moreover, we have $|B_{I, \sigma}|\le\lambda_k-(1-2/\log n)|I|\lambda_k/k$. 
\end{lemma}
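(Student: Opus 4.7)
The plan is to construct the pair $(I,\sigma)$ explicitly from the structure provided by \Cref{lem:unique-side}, and then verify both the agreement condition and the size bound.

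First, I would observe that the parts of the partition $\mathcal{P}$ are of two types: singleton parts $\{v\}$, which trivially agree with every $k$-cut, and nonempty cores $A_i$. Hence the real constraint is to ensure that every nonempty core lies inside a single component of $B_{I,\sigma}$. To this end, I would apply \Cref{lem:unique-side} to each nonempty core $A_i$ with a parameter $\alpha$ satisfying $\lambda_k \le \alpha\delta$; taking $\alpha = k$ suffices by the regularization assumption $\delta \ge \lambda_k/k$. This produces, for each $A_i$, a unique ``canonical'' non-singleton component $S^*_i$ of the fixed minimum $k$-cut $C$ that contains nearly all of $C(A_i)$ and is the only non-singleton component meeting $A_i$. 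Therefore every vertex of $A_i$ belongs either to $S^*_i$ or to some singleton component $\{v_j\}$ with $j \in [r]$.

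Given this, I would define $I$ to be the set of indices $j \in [r]$ such that $v_j$ lies inside some (necessarily unique) nonempty core $A_{i(j)}$, and set $\sigma(j)$ equal to the index of $S^*_{i(j)}$ in $C$. After the merging prescribed by $(I,\sigma)$, each core $A_i$ is entirely contained in the merged component corresponding to $S^*_i$, while singleton parts of $\mathcal{P}$ are trivially inside one component; hence $B_{I,\sigma}$ agrees with $\mathcal{P}$. For the size bound, merging $v_j$ into $S^*_{\sigma(j)}$ saves from the cut exactly the edges from $v_j$ into $S^*_{\sigma(j)}$; by the ``moreover'' clause of \Cref{lem:unique-side}, this quantity is at least $(1-2\epsilon)\deg(v_j) \ge (1-2\epsilon)\lambda_k/k$. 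Summing over $j \in I$ yields $|B_{I,\sigma}| \le \lambda_k - (1-2\epsilon)|I|\lambda_k/k$, which implies the claimed bound $\lambda_k - (1-2/\log n)|I|\lambda_k/k$ since $\epsilon \le 1/\log n$.

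The main obstacle is a bookkeeping point: verifying that the hypothesis $\alpha < o(\delta/(k\log n))$ of \Cref{lem:unique-side} holds when $\alpha = k$. This requires $\delta = \omega(k^2 \log n)$, which is guaranteed in the regime of interest, since \Cref{alg:main} only invokes the sparsification when $\lambda_k = \Omega(n^{1/(t+1)})$, comfortably larger than any polylogarithmic function of $k$ for fixed $k$ and large $n$. The remaining steps are direct applications of \Cref{lem:unique-side} and elementary edge counting.
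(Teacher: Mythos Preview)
Your proposal is correct and follows essentially the same route as the paper's own proof: define $I$ as those singleton components whose vertex lies in a nonempty core, use \Cref{lem:unique-side} to assign each such vertex to the unique large component $S^*$ of its core, and then invoke the ``moreover'' clause of \Cref{lem:unique-side} together with $\delta\ge\lambda_k/k$ to bound the edge savings by $(1-2\epsilon)\lambda_k/k$ per merged island. Your explicit check that $\alpha=k$ satisfies the hypothesis $\alpha<o(\delta/(k\log n))$ is a point the paper leaves implicit, and your phrase ``saves\ldots exactly the edges from $v_j$ into $S^*_{\sigma(j)}$'' is a slight understatement (edges between two islands merged into the same component are also saved), but this only helps the inequality.
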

\begin{proof}
Enumerate the singleton components as $S_1=\{v_1\},\ldots,S_r=\{v_r\}$. Let $T$ be the set of singleton components $S_i$ such that $S_i$ is contained in a part $V_j$ that has more vertices than just $v_i$ (i.e., $V_j\supsetneq\{v_i\}$). For every such component $S_i=\{v_i\}$, since $V_j\supsetneq\{v_i\}$, we must have $|V_j|>k$, since otherwise it would have been shattered into singletons on step~4 of the algorithm. So there must be a non-singleton component $S_{i^*}$ of the minimum $k$-cut intersecting $V_j$ (which is unique by \Cref{lem:unique-side}). This component must be the $S^*$ from \Cref{lem:unique-side}. We define $\sigma(i)=i^*$.

As we've argued in the previous paragraph, the border $S'_{r+1},\ldots,S'_k$ defined as $S'_i=S_i\cup\{v_j:j\in I,\sigma(j)=i\}$ agrees with the partition $V_1,\ldots,V_q$. It remains to show that $|E(S'_{r+1},\ldots,S'_k)|\le\lambda_k-(1-1/k)|I|\lambda_k/k$. For each component $S_i=\{v_i\}$ with $i\in I$, by \Cref{lem:unique-side}, the vertex $v_i$ has at least $(1-2\epsilon)\deg(v)$ neighbors in $S_{\sigma(i)}$, so merging $v_i$ with $S_{\sigma(i)}$ decreases the cut value by at least $(1-2\epsilon)\deg(v)$. It follows that the border has size at most $\lambda_k-(1-2\epsilon)|I|\deg(v)$, which meets the bound since $\epsilon=1/\log n$ and $\deg(v)\ge\delta\ge\lambda_k/k$ by assumption. 
\end{proof}
With \Cref{lem:border-size-bound}, this concludes the proof of \Cref{lem:kt}.

\section{Finding the Border}\label{sec:border}
In this section, we develop an algorithm to compute the border. The main lemma is the following, where $C$ represents the border we wish to find.
\Karger*

Our algorithm follows Karger's contraction algorithm, stated below, and its analysis from~\cite{GHLL20}.
\begin{algorithm}[H]
    \caption{Contraction Algorithm~\cite{GHLL20}}
    \label{euclid}
    \begin{algorithmic}[1] 
            \While{$|V| > \tau$} \label{line:1}
                        \State Choose an edge $e \in E$ at random from $G$, with probability proportional to its weight.
                \State Contract the two vertices in $e$ and remove
                self-loops.
            \EndWhile \label{line:4}
            \State Return a $k$-cut of $G$ chosen uniformly at random. 
    \end{algorithmic}
\end{algorithm}
The key lemma we use is the following from~\cite{GHLL20}.
\begin{lemma}[Lemma~17 of~\cite{GHLL20}]
\label{llem1}
Suppose that $J$ is an edge set with $\alpha = |J|/\lambda_k$ and $n \geq \tau \geq 8 \alpha k^2 + 2 k$. Then $J$ survives lines~\ref{line:1}~to~\ref{line:4} of the Contraction Algorithm with probability at least $(n/\tau)^{-\alpha k} k^{-O(\alpha k^2)}$.
\end{lemma}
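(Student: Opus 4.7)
The plan is to analyze Karger's contraction algorithm step by step. Let $E_v$ denote the (multi)edge set when the contracted graph has $v$ super-vertices, and let $p_v$ denote the probability that the next contraction picks an edge of $J$, conditioned on $J$ having survived the previous $n-v$ steps. Since the contracted edge is chosen with probability proportional to its multiplicity, $p_v \le |J|/|E_v|$, and the chain rule yields
\[
\Pr[J \text{ survives}] \;\ge\; \prod_{v=\tau+1}^n \left(1 - \frac{|J|}{|E_v|}\right).
\]
Proving the lemma reduces to lower bounding $|E_v|$ uniformly across all relevant steps and then carefully evaluating this telescoping product to extract the correct exponent $\alpha k$.

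For the lower bound on $|E_v|$, I would invoke the isolating-cut argument: since contractions never decrease the minimum $k$-cut, the current contracted graph still has minimum $k$-cut at least $\lambda_k$, so the sum of the $k-1$ smallest super-vertex degrees is at least $\lambda_k$ (else isolating those super-vertices as singleton components would produce a $k$-cut of value strictly below $\lambda_k$). It follows that at most $k-2$ super-vertices have degree below $\lambda_k/(k-1)$, and hence
\[
2|E_v| \;\ge\; (v-k+2)\cdot \frac{\lambda_k}{k-1}.
\]
The hypothesis $\tau \ge 8\alpha k^2 + 2k$ ensures the multiplicative correction $(v-k+2)/v$ is bounded below by a constant throughout, so $|E_v| = \Omega(v\lambda_k/k)$ and $p_v = O(\alpha k/v)$ at every relevant step.

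Evaluating the product $\prod_{v=\tau+1}^n(1 - O(\alpha k/v))$ via the telescoping identity $\prod_{v=a+1}^b (1 - c/v) = (a)_c/(b)_c \ge (a/b)^c$ (where $(x)_c$ denotes the falling factorial) gives a bound of the form $(\tau/n)^{O(\alpha k)}$, i.e., $(n/\tau)^{-O(\alpha k)}$. This matches the shape of the target $(n/\tau)^{-\alpha k}$, up to the constant inside the exponent. To tighten that constant down to exactly $1$, the per-step bound $p_v \le |J|/|E_v|$ must be sharpened on \emph{good} steps---those where the minimum super-vertex degree is already $\Omega(\lambda_k)$, so that $|E_v| \ge v\lambda_k/2$ and $p_v \le 2\alpha/v$---while the at most $k-2$ low-degree super-vertices present at any moment are handled via a separate amortized accounting of \emph{bad} steps. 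The claim is that only $O(\alpha k^2)$ bad steps can occur across the entire run, each losing at most a $1 - O(1/k)$ factor in the product, yielding the slack $k^{-O(\alpha k^2)}$.

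The main obstacle is the amortized accounting. A natural potential function is $\Phi := \sum_{u:\,\deg(u) < \lambda_k/(k-1)} \left(\lambda_k/(k-1) - \deg(u)\right)$, summed over the current low-degree super-vertices; each bad contraction should decrease $\Phi$ by $\Omega(\lambda_k/k)$ (either by retiring a low-degree super-vertex entirely, or by appreciably raising its degree through merger with a heavier neighbor), while only contractions adjacent to $J$ can increase $\Phi$, by at most $O(\alpha\lambda_k)$ in total. Since the initial $\Phi$ is at most $(k-2)\lambda_k/(k-1)$, the bad-step count is bounded by $O(\alpha k^2)$ as claimed. Making this argument rigorous---in particular ensuring that no $\log n$ factor leaks into the main exponent and that the hypothesis $\tau \ge 8\alpha k^2 + 2k$ is tight enough to absorb all boundary corrections in both the telescoping product and the charging scheme---is the technical heart of the proof.
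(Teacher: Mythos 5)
The paper does not prove this lemma at all: it is imported verbatim as Lemma~17 of~\cite{GHLL20}, and its proof there is essentially the main technical contribution of that paper, so your sketch has to be measured against that argument. Measured this way, your plan has a genuine gap: the chain-rule product together with the isolating-cut bound $2|E_v|\ge (v-k+2)\lambda_k/(k-1)$ can never give an exponent better than $2\alpha(k-1)$, whereas the lemma claims $\alpha k$; for $k\ge 3$ the discrepancy is a factor $(n/\tau)^{-\alpha(k-2)}$, which cannot be absorbed into the $k^{-O(\alpha k^2)}$ slack when $n/\tau$ is polynomially large. Your proposed repair---``good'' steps where the minimum super-vertex degree is $\Omega(\lambda_k)$, and at most $O(\alpha k^2)$ ``bad'' steps controlled by the potential $\Phi$---is miscalibrated and attacks the wrong obstruction. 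First, the threshold in the potential ($\lambda_k/(k-1)$) does not match the threshold needed for the good-step bound ($\Omega(\lambda_k)$). Second, in the instances where the lemma is tight, e.g.\ a cycle in which every (super-)edge has weight $\lambda_k/k$, the survival probability really is $\approx(\tau/n)^{\alpha k}$, every super-vertex has degree $2\lambda_k/k=o(\lambda_k)$ at every scale, so \emph{all} $n-\tau$ steps are ``bad'' in your sense and no charging scheme can bound their number by $O(\alpha k^2)$. Third, even when no low-degree super-vertices exist at all (so $\Phi\equiv 0$), the per-step bound is still only $p_v\le 2\alpha(k-1)/v$, because a graph can have every degree exactly $\lambda_k/(k-1)$ and total weight $\approx v\lambda_k/(2(k-1))$ while keeping minimum $k$-cut $\lambda_k$. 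Closing exactly this factor of two---showing that such near-minimally-sparse graphs cannot hurt the bound across all scales of the contraction process---is the heart of the analysis in~\cite{GHLL20}, and it is a global argument about how the total weight and cut structure evolve, not a local degree-accounting; no amortization over a bounded number of exceptional steps can substitute for it.

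Two secondary points. The claim that only contractions adjacent to $J$ can increase $\Phi$ is false: contracting a non-$J$ super-edge between two vertices that carry most of their weight toward each other produces a merged vertex of small degree. And even the opening inequality $p_v\le |J|/|E_v|$ needs justification beyond what you give: it is correct when $J$ is a full $s$-cut (while $J$ survives, every super-vertex stays inside one part, so every super-edge crossing parts consists only of $J$-edges), but for a genuinely arbitrary edge set the step that kills $J$ is the contraction of any super-edge containing a $J$-edge, whose weight can greatly exceed $w(J)$; one must either restrict to cuts (as in the paper's application to the border $B_{I,\sigma}$) or fix the notion of survival before the product bound is valid.
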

The algorithm sets $\tau=8\beta k^2+2k$, and by \Cref{llem1}, any $s$-cut $C$ of size $\alpha\lambda_k$ for some $\alpha\le\beta$ survives lines~\ref{line:1}~to~\ref{line:4} of the Contraction Algorithm with probability $k^{-O(k^2)}n^{-\alpha k}\ge k^{-O(k^2)}n^{-\beta k}$. The algorithm sets $s$ for the parameter $k$, and $C$ is output with probability $1/r^\tau\ge k^{-O(k^2)}$. Overall, the probability of outputting $C$ is $k^{-O(k^2)}n^{-\beta k}$. Repeating the algorithm $k^{O(k^2)}n^{\beta k}\log n$ times, we can output a list of cuts that contains $C$ with high probability.

\section{Finding the Islands}\label{sec:islands}
In this section, we prove the following lemma. 
\Islands*

We present an algorithm for $r$-island which is a variant of Ne\v{s}etril and Poljak's $k$-clique algorithm~\cite{nevsetvril1985complexity}. Given an input graph $G$, we want to find the optimal $r$ vertices to cut off from $G$. Note that this is similar to finding the minimum $r$-clique in $G$, except that we need to take into account the edges from the $r$ islands to the remaining giant component in $G$. We first consider the case where $r$ is divisible by $3$. 

\begin{algorithm}[H]
\caption{Island Discovery Algorithm}
\label{alg:island}
\begin{algorithmic}[1]
\State We construct a weighted graph $G'$ as follows --- for every subset of vertices $S$ such that $|S| = \frac{r}{3}$, create a vertex $v_S$. Denote the total number of edges among vertices in $S$ as $w_S$, and denote the total number of edges between $S$ and $V\setminus S$ as $w_S^V$. For each pair of vertices $v_S, v_T$, let $w_{S, T}$ be the total number of edges between $S$ and $T$ if they are disjoint. Add an edge between them of weight $w_{S, T}$.
\State We want to find the minimum weight triangle in the graph $G'$. To do so, we guess the weight of a minimum weight triangle as follows: Denote the three vertices as $v_{S_1}, v_{S_2}, v_{S_3}$. Guess $w_{S_1}, w_{S_2}, w_{S_3}, w^V_{S_1}, w^V_{S_2}, w^V_{S_3}, $ and $w_{S_1, S_2}, w_{S_2, S_3}, w_{S_3, S_1}$. 
\State Denote $A$ as the binary adjacency matrix for $G'$. Let $F_i$ denotes the set of vertices $v_S$ such that $w_S = w_{S_i}, w^V_S = w^V_{S_i}$. Define $A_{1, 2}$ to be the matrix $A$ with the rows restricted to vertices in $F_1$, and columns restricted to vertices in $F_2$. Additionally, for $v_S \in F_1, v_T\in F_2$, if $w_{S, T}\ne w_{S_1, S_2}$, set $A_{1, 2}[S, T] = 0$. Define $A_{2, 3}, A_{3, 1}$ similarly.
\State Compute the matrix product $B = A_{1, 2}\times A_{2, 3}$. If there exists $v_S\in F_1$, $v_T\in F_3$ such that $B[S, T]\ne 0, A_{3, 1} = 1$, then find $v_R$ such that $A_{1,2}[S, R] = A_{2, 3}[R, T] = 1$ and return $S, R, T$. Otherwise, return Null. 
\end{algorithmic}
\end{algorithm}

\begin{claim}
Algorithm~\ref{alg:island} returns an optimal $(r+1)$-cut with $r$ islands with probability at least $\frac{1}{O(r^{15}n^3)}$. 
\end{claim}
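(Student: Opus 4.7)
The plan is to prove the claim by (i) writing the value of an optimal $r$-island $(r{+}1)$-cut as a closed-form expression in the nine ``profile'' quantities associated with a tripartition of the islands, (ii) showing that any correct guess of this profile causes the matrix-multiplication step to output a triangle whose corresponding island set is optimal, and (iii) bounding the number of distinct profiles so that a uniformly random guess hits the correct one with probability at least $1/O(r^{15}n^3)$.

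First I would fix an optimal $(r{+}1)$-cut $C^\star$ with islands $I^\star = \{v_1,\ldots,v_r\}$ and pick an arbitrary partition $I^\star = S_1^\star \sqcup S_2^\star \sqcup S_3^\star$ with $|S_i^\star| = r/3$ (using $3 \mid r$). Write $\omega_i = w_{S_i^\star}$, $\omega_i^V = w^V_{S_i^\star}$, and $\omega_{ij} = w_{S_i^\star,S_j^\star}$ for the nine target values. The first thing to check is the identity
\[
|C^\star| \;=\; \sum_{i=1}^{3}\omega_i \;+\; \sum_{i=1}^{3}\omega_i^V \;-\; \sum_{1\le i<j\le 3}\omega_{ij}.
\]
This is a one-line inclusion-exclusion: the edges cut by $C^\star$ are exactly those with at least one endpoint in $I^\star$; the sum $\sum_i \omega_i^V$ counts each $S_i^\star$--$S_j^\star$ edge twice and each $S_i^\star$--$(V\setminus I^\star)$ edge once, so subtracting $\sum_{i<j}\omega_{ij}$ and adding back $\sum_i \omega_i$ (edges inside a single $S_i^\star$) gives the cut value.

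Next I would verify correctness of Step~4 under a correct guess. Suppose the algorithm happens to guess $(\omega_1,\omega_2,\omega_3,\omega_1^V,\omega_2^V,\omega_3^V,\omega_{12},\omega_{23},\omega_{31})$. Because Step~1 never adds an edge in $G'$ between non-disjoint pairs, every nonzero entry of $A_{1,2}$, $A_{2,3}$, $A_{3,1}$ comes from a disjoint pair with a matching cross-weight. Therefore, whenever the product $B = A_{1,2}A_{2,3}$ has $B[S,T]\neq 0$ and $A_{3,1}[S,T]=1$, there exists $v_R \in F_2$ such that $(v_S,v_R,v_T)$ is a triangle with pairwise-disjoint label sets satisfying all three pairwise weight constraints and all three singleton profile constraints $(w_S,w_S^V) = (\omega_i,\omega_i^V)$. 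The triangle $(v_{S_1^\star},v_{S_2^\star},v_{S_3^\star})$ itself is such a witness, so Step~4 succeeds and, by the identity above, the returned island set $S \cup R \cup T$ realizes an $r$-island $(r{+}1)$-cut of value exactly $|C^\star|$. Recovery of $R$ from $B[S,T]\neq 0$ is a direct $O(n^{r/3})$ scan.

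Finally I would count profiles. Since $G$ is simple, $w_S \in \{0,\ldots,\binom{r/3}{2}\} = O(r^2)$, $w_{S,T} \in \{0,\ldots,(r/3)^2\} = O(r^2)$, and $w_S^V \in \{0,\ldots,(r/3)\cdot n\} = O(rn)$. Multiplying the three ranges for each type gives at most $O(r^2)^3 \cdot O(rn)^3 \cdot O(r^2)^3 = O(r^{15}n^3)$ profiles, so a uniformly random guess equals the target profile with probability at least $1/O(r^{15}n^3)$, proving the claim. The only conceptual step that requires care is the inclusion-exclusion identity and the observation that the algorithm's zeroing-out of $A_{i,j}[S,T]$ whenever $S\cap T\neq\emptyset$ or $w_{S,T}\neq\omega_{ij}$ makes the matrix-product read-off truly restrict to disjoint, correctly-weighted triangles; the rest is routine counting.
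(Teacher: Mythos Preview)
Your proposal is correct and follows essentially the same approach as the paper: expand the cut value as a function of the nine profile parameters, argue that a correct guess forces the matrix-multiplication triangle search to return an optimal island set, and count the number of profiles to get the $1/O(r^{15}n^3)$ success probability. Your inclusion-exclusion identity $\sum_i\omega_i+\sum_i\omega_i^V-\sum_{i<j}\omega_{ij}$ is exactly what the paper's longer expression simplifies to, and you are a bit more explicit than the paper in noting that Step~1 only places edges between disjoint pairs, which is what guarantees the returned triangle corresponds to a genuine $r$-element island set.
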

\begin{proof}
We first note that given the nine parameters $w_{S_1}, w_{S_2}, w_{S_3}, w^V_{S_1}, w^V_{S_2}, w^V_{S_3},$ and $w_{S_1, S_2}, w_{S_2, S_3}, w_{S_3, S_1}$, the weight of the returned cut would be $w_{S_1} + w_{S_2} + w_{S_3} + w_{S_1, S_2} + w_{S_2, S_3} + w_{S_3, S_1} + (w^V_{S_1} - w_{S_1, S_2} - w_{S_3, S_1}) + (w^V_{S_2} - w_{S_1, S_2} - w_{S_2, S_3}) + (w^V_{S_3} - w_{S_2, S_3} - w_{S_3, S_1})$. In other words, the nine parameters precisely specify the weight of the returned cut. Therefore, if we guess the parameters correctly, our algorithm will return $r$-vertices that gives the minimum $r+1$ cut with $r$ islands. Note that $w_{S_1}, w_{S_2}, w_{S_3}$ and $w_{S_1, S_2}, w_{S_2, S_3}, w_{S_3, S_1}$ each have $O(r^2)$ possible values, while $w^V_{S_1}, w^V_{S_2}, w^V_{S_3}$ each have $O(rn)$ possible values. Therefore there are at most $O(r^{15}n^3)$ possible combination of values for the nine parameters, which means we guess correctly with probability as least $\frac{1}{O(r^{15}n^3)}$. The rest of the algorithm is a standard triangle detection algorithm using matrix multiplication, which has runtime $O(n^{\frac{\omega r}{3}})$. 
\end{proof}
If $r$ is not divisible by $3$, we can add up to two isolated vertices into the graph and reduce to the case where $r$ is divisible by $3$. This increase the runtime by a factor of $n^{O(1)}$. Now note that our above algorithm can be easily made deterministic by going over all $O(r^{15}n^3)$ possible combinations of the nine parameters instead of guessing them. This proves Lemma~\ref{lem:islands}.

\section*{Acknowledgements}
The authors would like to thank Anupam Gupta for many constructive discussions and comments. 

\bibliographystyle{alpha}
\bibliography{bib}
\end{document}